\documentclass[letterpaper]{rsauthor}
\usepackage{amsfonts,amsmath,amsbsy,epsfig,natbib}
\hoffset=0.1in
\voffset=-0.1in

\begin{document}

\title{From Molecular Dynamics to Brownian Dynamics}

\author{\large Radek Erban}

\address{{\small Mathematical Institute, University of Oxford \\ 
Radcliffe Observatory Quarter,  Woodstock Road \\
Oxford OX2 6GG, United Kingdom \\ 
\rule{0pt}{4mm}
e-mail: erban@maths.ox.ac.uk}}

\date{\today}

\abstract{Three coarse-grained molecular dynamics (MD) models are 
investigated with the aim of developing and analyzing multiscale 
methods which use MD simulations in parts of the 
computational domain and (less detailed) Brownian dynamics (BD) simulations 
in the remainder of the domain. The first MD model is formulated 
in one spatial dimension. It is based on elastic collisions of
heavy molecules (e.g. proteins) with light point particles 
(e.g. water molecules). Two three-dimensional MD models are then
investigated. The obtained results are applied to a simplified model 
of protein binding to receptors on the cellular membrane. It is shown 
that modern BD simulators of intracellular processes can be 
used in the bulk and accurately coupled with a (more detailed) MD model 
of protein binding which is used close to the membrane.
}

\keywords{multiscale modelling, molecular dynamics, Brownian dynamics}

\maketitle

\section{Introduction}
Brownian dynamics (BD) simulations have been used for the modelling
of a number of spatio-temporal processes in cellular and molecular 
biology in recent years, including models of intracellular 
calcium dynamics \citep{Flegg:2013:DSN}, signal trasduction 
in {\it E. coli} chemotaxis \citep{Lipkow:2005:SDP} and MAPK 
pathway \citep{Takahashi:2010:STC}. 
In these applications, trajectories and interactions between key 
biomolecules (e.g. proteins) are calculated using BD methods, while 
other components of the system (e.g. solvent molecules), which are of 
no special interest to a modeller, are not explicitly included in the 
simulation, but contribute to the dynamics of Brownian particles 
collectively as a random force. This reduces the dimensionality of 
the problem, making BD less computationally intensive than the 
corresponding molecular dynamics (MD) simulations. 

Denoting the position of a Brownian particle by 
${\mathbf X} = [X_1,X_2,X_3]$ and its diffusion constant by $D$, a simple 
model of Brownian  motion is given by the (overdamped) Langevin equation 
\begin{equation}
\mbox{d}X_i = \sqrt{2 D} \; \mbox{d}W_i, 
\qquad
i=1,2,3,
\label{BDSDE}
\end{equation}
where $W_i$, $i=1,2,3$, are three independent Wiener 
processes \citep{Erban:2007:PGS}. 
BD approaches which are based on (\ref{BDSDE}) have been implemented 
in a number of software packages designed for spatio-temporal modelling 
in systems biology, including Smoldyn \citep{Andrews:2004:SSC}, 
MCell \citep{Stiles:2001:MCM}, Green's Function Reaction
Dynamics \citep{vanZon:2005:GFR} and 
First-passage kinetic Monte Carlo method \citep{Opplestrup:2009:FKM}. 
The software package Smoldyn 
discretizes (\ref{BDSDE}) using a fixed time step $\Delta t$, i.e. it computes 
the time evolution of the position ${\mathbf X} \equiv {\mathbf X}(t)$
of each molecule by 
\begin{equation}
X_i(t+\Delta t) = X_i(t) + \sqrt{2D \Delta t} \, \xi_{i},
\qquad
i=1,2,3,
\label{eq4}
\end{equation}
where $[\xi_{1},\xi_{2},\xi_{3}]$
is a vector of normally distributed random numbers with zero mean
and unit variance. A different BD approach is implemented in
the Green's Function Reaction Dynamics \citep{Takahashi:2010:STC} 
which evolves time using a variable
time step. It approximately computes the time 
when the next reactive event happens. This means that trajectories
of molecules which are not surrounded by other reactants can 
be simulated over longer time steps.

Altough the BD models are becoming a popular choice for stochastic
modelling of intracellular spatio-temporal processes, several
difficulties prevent the use of BD for some systems. First of all,
detailed BD models are often more computationally intensive than 
coarser spatio-temporal models which are written for concentrations 
of biochemical species. In some applications (e.g. intracellular 
calcium dynamics \citep{Flegg:2013:DSN}
or actin dynamics in filopodia \citep{Erban:2013:MSR}) 
individual trajectories (computed by BD) 
are important only in certain parts of the computational domain, 
whilst in the remainder of the domain a coarser, less detailed, 
method can be used. In these applications, the computational
intensity of BD simulations can be decreased by using 
multiscale methods which efficiently and accurately
combine models with a different level of detail in 
different parts of the computational domain
\citep{Flegg:2012:TRM,Franz:2013:MRA}.

Another difficulty of BD simulations in cell and molecular biology 
is that detailed BD models require more parameters than coarser
(macroscopic) models. In some studies, 
macroscopic parameters are used to infer BD parameters 
\citep{Lipkova:2011:ABD,Andrews:2004:SSC}.
For example, knowing the macroscopic reaction rate $k$ of a bimolecular 
reaction $A + B \to C$ and diffusion constants of reactants, one can 
calculate a (microscopic) reaction radius of BD simulations which gives 
the corresponding macroscopic parameters in the limit of many particles.
In the classical Smoluchovski limit \citep{Smoluchowski:1917:VMT}, 
a bimolecular reaction occurs whenever the distance 
of reactants is less than the reaction radius 
\begin{equation}
\varrho = \frac{k}{4 \pi (D_A + D_B)}
\label{smol}
\end{equation} 
where $D_A$ (resp. $D_B$) is the diffusion constant of reactant
$A$ (resp. $B$). Although this approach is
commonly applied in stochastic reaction-diffusion models,
it is not the most satisfactory, because different microscopic
models can lead to the same macroscopic process and parameters 
\citep{Erban:2009:SMR,Lipkova:2011:ABD}. 
For example, the simplest Smoluchowski model (\ref{smol}) 
assumes that all collisions are reactive but, in reality, many non-reactive 
collisions of molecules happen before a reactive collision occurs.
Therefore, some algorithms postulate that molecules only react with 
a certain rate (probability) when the distance between reactants is less 
than a modified reaction radius (which is larger than $\varrho$).
Other methods discretize the Langevin equation with time 
step $\Delta t$ and substitute the Smoluchowski formula (\ref{smol})
(which is valid for an infinitely small time step) by a tabulated function 
computed numerically \citep{Andrews:2004:SSC}. 
However, all of these approaches are 
verified by considering the macroscopic limit (of many reactants) and 
showing that the reaction occurs with the given rate $k$
in this limit.

A different approach to parameterize BD models is to use a more 
detailed description written in terms of MD. In this paper, we  
investigate connections between BD and MD models with the aim of 
developing and analyzing of multiscale methods which couple BD 
and MD simulations. 
We consider a (computationally intensive) MD simulation in 
domain $\Omega$ which is either one-dimensional or
three-dimensional, i.e. $\Omega \subset {\mathbb R}$ or
$\Omega \subset {\mathbb R}^3$. Our main goal is to design and analyse 
multiscale methods which can compute spatio-temporal statistics 
with MD-level of detail in the subdomain $\Omega_D \subset \Omega$. 
We define 
\begin{equation}
\Omega_C = \Omega \setminus \overline{\Omega_D},
\qquad
I = \partial \Omega_D \cap \partial \Omega_C, 
\label{geom1}
\end{equation} 
where $\Omega_D$ and $\Omega$ are open sets, 
the (open) set $\Omega_C$ is the complement of $\Omega_D$ 
and $I$ is the shared interface (boundary) between 
$\Omega_D$ and $\Omega_C$. In the multiscale set up
(\ref{geom1}), we use a detailed MD model in
$\Omega_D$ and a coarser BD model in $\Omega_C$. 

In this paper, we focus on a simple MD approach which is 
introduced in Sections \ref{secMDmodelA} and \ref{secMDmodelBC}.
A few (heavy) particles with mass $M$ and radius $R$ 
are coupled with a large number of light point particles with masses 
$m \ll M$. The collisions of particles are without friction, which 
means that post-collision velocities can be computed using the
conservation of momentum and energy \citep{Holley:1971:MHP,Durr:1981:MMB}.
We will introduce and study three MD models which make use
of elastic collisions. They will be denoted as MD models 
[A], [B] and [C] in what follows.
More complicated MD approaches are discussed 
in Section \ref{secdiscussion}.

The first MD model [A] is introduced in Section \ref{secMDmodelA}. 
It is a one-dimensional MD model where all particles move 
along the real line. In particular, the radius $R$ do not 
have to be considered, because it has no influence on the 
dynamics of large particles. In one dimension, heat 
bath particles cannot pass each other, which makes the MD model [A] 
different from three-dimensional models in Section \ref{secMDmodelBC}
where heat bath particles (points) do not interact with each other. 

In Section \ref{secMDmodelBC}, we 
introduce two three-dimensional models, denoted [B] and [C], 
where the nonzero radius $R$ is one of  the key parameters. 
To make one-dimensional and three-dimensional models comparable, 
we keep $R$ fixed in the three-dimensional model and we study
the behaviour of all MD models in the limit $M/m \to \infty.$ This limit
can be achieved in many different ways. For example, we can keep 
$m$ fixed and pass $M \to \infty$, or we can keep $M$ fixed and 
pass $m \to 0$. In what follows we define the parameter
\begin{equation}
\mu = \frac{M}{m}.
\label{defmu}
\end{equation}
This parameter is dimensionless, even if we assume that $M$ and $m$
have physical units of mass. However, in this paper, all parameters
are considered dimensionless for simplicity. We are interested 
in the limit $\mu \to \infty$.

All three models [A], [B] and [C] converge in apropriate
limits to the Brownian motion of large particles given by 
(\ref{BDSDE}). One can also show that these models converge to
the Langevin description \citep{Holley:1971:MHP,Durr:1981:MMB,Dunkel:2006:RBM}
\begin{eqnarray}
\mbox{d}X_i & = & V_i \; \mbox{d}t, 
\label{BDXeq}
\\
\mbox{d}V_i & = & - \gamma \, V_i \, \mbox{d}t 
+ \gamma \sqrt{2 D} \; \mbox{d}W_i, 
\quad
i=1,2,3,
\label{BDVeq}
\end{eqnarray}
where $[X_1,X_2,X_3]$ is the position of a diffusing
molecule, $[V_1,V_2,V_3]$ is its velocity, $D$ is
the diffusion coefficient and $\gamma$ is the friction coefficient. 
This description can be further reduced to 
(\ref{BDSDE}) in the overdamped limit $\gamma \to \infty$. 
We overview the results 
which relate MD models [A], [B] and [C] with Brownian motion 
in Sections \ref{secMDmodelA} and \ref{secMDmodelBC}.

Both (\ref{BDSDE}) or (\ref{BDXeq})--(\ref{BDVeq}) reduce the 
dimensionality of the problem, making BD less computationally intensive 
than the corresponding MD simulations. In Sections \ref{secfromMDtoBD1D}
and \ref{secfromMDtoBD3D}, we study how MD models [A], [B] and [C]
can be used in one part $\Omega_D$ of the computational domain $\Omega$
and the BD models (\ref{BDSDE}) or (\ref{BDXeq})--(\ref{BDVeq})
in the remainder $\Omega_C$, making use of the notation (\ref{geom1}). 
We apply our findings to a simplified model of protein binding to
receptors in Section \ref{bioapplication}. We conclude with discussing
our results in Section \ref{secdiscussion}.

\section{One-dimensional MD model [A]}
\label{secMDmodelA}

\noindent
The MD model [A] is described in terms of positions $x^i$ 
and velocities $v^i$, $i=1,2,3,\dots$, of heat bath particles,
and positions $X^i$ and velocities $V^i$, $i=1,2,\dots,N$, 
of heavy particles of mass $M \gg m$, where $m$ is the mass
of a heat bath particle. In our computer implementations,
we will consider a finite number of heat bath particles.
However, we formulate the MD model in terms of (countably)
infinitely many of heat bath particles which are initially
distributed along the real line according to the Poisson 
distribution with density
\begin{equation}
\lambda_\mu = \frac{1}{4} \sqrt{\frac{\pi(\mu+1)\gamma}{2 D}},
\label{distr1Dx}
\end{equation}
where $\mu$ is given by (\ref{defmu}), and $D$ and $\gamma$ are
positive constants. This means that the probability that
there are $j$ particles in a subinterval $[a,b] \subset {\mathbb R}$, 
$a < b$, is equal to 
$$
\frac{(\lambda_\mu (b-a))^j}{j!} \exp \big[ - \lambda_\mu (b-a) \big],
$$
where $(b-a)$ is the lentgth of the interval $[a,b]$.
Initial velocities of heat bath particles 
are given by the normal distribution
\begin{equation}
f_\mu(v) = \frac{1}{\sigma_\mu \sqrt{2 \pi}}
\exp \left( - \frac{v^2}{2 \sigma_\mu^2} \right),
\qquad \mbox{where} \qquad
\sigma_\mu = \sqrt{(\mu+1) D \gamma}.
\label{distr1Dv}
\end{equation}
Let us consider a model with a single heavy particle, 
i.e. $N=1$. Then its location $X^1$ and velocity $V^1$ will
be denoted as $X$ and $V$ to simplify our notation. Whenever
the heavy particle collides with the light particle 
with velocity $v^i$, their velocities are updated 
using the conservation of mass and momentum:
\begin{eqnarray}
{\widetilde{V}}
&=& 
\frac{M - m}{M + m} \, V
 + 
\frac{2 m}{M + m} \, v^i,
\label{col1DV} \\
{\widetilde{v}}^{i} &=& \frac{m - M}{M + m} \, v^i
 + 
\frac{2 M}{M + m} \, V,
\label{col1Dv}
\end{eqnarray}
where tildes denote post-collision velocities. Using (\ref{defmu}),
the equations (\ref{col1DV})--(\ref{col1Dv}) can be rewritten as
\begin{equation}
{\widetilde{V}}
= 
\frac{\mu - 1}{\mu + 1} \, V
 + 
\frac{2}{\mu + 1} \, v^i, \qquad\qquad
{\widetilde{v}}^{i}
= \frac{1 - \mu}{\mu + 1} \, v^i
 + 
\frac{2 \mu}{\mu + 1} \, V.
\label{col1Dmu}
\end{equation}
The following result can be shown for the above MD model [A]:

\begin{lemma} \label{convergenceelastic}
Let $\gamma > 0$ and $D > 0$. Let us consider the heavy particle
of mass $M$ with initial position $X_\mu(0) = X_0$ and initial velocity
$V_\mu(0) = V_0$ which is subject to elastic collisions 
$(\ref{col1Dmu})$ with heat bath particles
of mass $m$ whose initial positions and velocities are distributed
according to $(\ref{distr1Dx})$--$(\ref{distr1Dv})$. Then the
$X_\mu$ and $V_\mu$ converges (as $\mu \to \infty$) in distribution 
to the solution $X$ and $V$ of equations 
\begin{equation}
\mbox{{\rm d}}X = V \, \mbox{{\rm d}}t
\qquad
\mbox{and} 
\qquad
\mbox{{\rm d}}V = - \gamma \, V + \gamma \sqrt{2 D} \; 
\mbox{{\rm d}}W,
\label{evollimitXV}
\end{equation} 
where $X(0)=X_0$ and $V(0)=V_0$.
That is, $X$ and $V$ solve the one-dimensional version  
of equations $(\ref{BDXeq})$--$(\ref{BDVeq})$.
\end{lemma}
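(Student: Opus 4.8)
\emph{Reduction to the velocity process and an effective free heat bath.} Since $X_\mu(t) = X_0 + \int_0^t V_\mu(s)\,\mbox{d}s$, the position is a continuous functional of the velocity path, so by the continuous mapping theorem it suffices to establish weak convergence of $V_\mu$ to the Ornstein--Uhlenbeck process defined by the second equation in (\ref{evollimitXV}) and then to upgrade this to joint convergence of the pair. The first structural simplification I would exploit is special to one dimension: all heat bath particles share the mass $m$, so an elastic collision between two of them merely exchanges their velocities. After relabelling, the bath therefore evolves as a gas of \emph{non-interacting} points that free-stream between encounters with the heavy particle, and only collisions (\ref{col1Dmu}) with the heavy particle genuinely alter the dynamics. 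This reduces model [A] to a single heavy particle immersed in a Poisson gas of free points whose velocities are distributed according to $f_\mu$.

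\emph{Generator computation.} From (\ref{col1Dmu}) a single collision with a bath particle of velocity $v$ changes the heavy velocity by $\widetilde V - V = \tfrac{2}{\mu+1}(v-V)$, which is of order $\mu^{-1/2}$ for a typical $v$ since $\sigma_\mu\sim\mu^{1/2}$. The rate at which the particle moving at velocity $V$ meets bath particles with velocity in $[v,v+\mbox{d}v]$ is $\lambda_\mu\,|v-V|\,f_\mu(v)\,\mbox{d}v$. Treating the successive incoming velocities as independent draws from this flux measure, $V_\mu$ is approximately a Markov jump process with generator
\begin{equation}
\mathcal L_\mu \phi(V) = \lambda_\mu \int_{\mathbb R} \Big[\phi\big(V + \tfrac{2}{\mu+1}(v-V)\big) - \phi(V)\Big]\, |v-V|\, f_\mu(v)\, \mbox{d}v .
\label{genmu}
\end{equation}
Expanding $\phi$ to second order, the first two jump moments per unit time are
\begin{eqnarray}
b_\mu(V) &=& \frac{2\lambda_\mu}{\mu+1}\int_{\mathbb R} (v-V)\,|v-V|\, f_\mu(v)\, \mbox{d}v,
\label{driftmu}\\
a_\mu(V) &=& \frac{4\lambda_\mu}{(\mu+1)^2}\int_{\mathbb R} (v-V)^2\,|v-V|\, f_\mu(v)\, \mbox{d}v .
\label{diffmu}
\end{eqnarray}
Using the Gaussian moments $\int |v|\,f_\mu\,\mbox{d}v = \sigma_\mu\sqrt{2/\pi}$ and $\int |v|^3 f_\mu\,\mbox{d}v = 2\sigma_\mu^3\sqrt{2/\pi}$ together with the precise choices (\ref{distr1Dx})--(\ref{distr1Dv}) of $\lambda_\mu$ and $\sigma_\mu$, I expect all powers of $\mu$ to cancel and to obtain, as $\mu\to\infty$,
\begin{equation}
b_\mu(V) \longrightarrow -\gamma V, \qquad a_\mu(V)\longrightarrow 2D\gamma^2,
\label{limits}
\end{equation}
uniformly on compact $V$-sets, while the third and higher jump moments per unit time are $O(\mu^{-1/2})$ and vanish. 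Hence $\mathcal L_\mu \to \mathcal L$ with $\mathcal L\phi(V) = -\gamma V\,\phi'(V) + D\gamma^2\,\phi''(V)$, which is exactly the generator of $\mbox{d}V = -\gamma V\,\mbox{d}t + \gamma\sqrt{2D}\,\mbox{d}W$. The scalings in (\ref{distr1Dx})--(\ref{distr1Dv}) are tuned so that the collision rate is $O(\mu)$ while the squared jump is $O(\mu^{-1})$, producing an $O(1)$ diffusion, and so that the $V$-induced asymmetry of the flux produces the $O(1)$ restoring drift $-\gamma V$.

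\emph{Tightness and identification of the limit.} To convert generator convergence into weak convergence of processes I would first prove tightness of $\{V_\mu\}$ in the Skorokhod space $D([0,T])$, using the restoring drift and the uniformly bounded diffusion coefficient (the jump-moment bounds above control the modulus of continuity and rule out limiting jumps). I would then show that every subsequential limit solves the martingale problem for $\mathcal L$, which is well posed, so the limit is the unique Ornstein--Uhlenbeck process. Joint tightness of $(X_\mu,V_\mu)$ together with the representation $X_\mu = X_0 + \int_0^t V_\mu\,\mbox{d}s$ then yields convergence of the pair to the solution of (\ref{evollimitXV}).

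\emph{Main obstacle.} The genuinely hard step is the independence assumption underlying (\ref{genmu}): the velocities of successive bath particles striking the heavy particle are \emph{not} exactly independent samples from the flux measure, because a point that has just recoiled off the heavy particle may return and strike it again, and such recollisions introduce temporal correlations into the effective force. The crux is therefore to show that these recollision correlations are asymptotically negligible on the diffusive time scale, so that the limiting velocity dynamics is Markovian. This is precisely the mechanism analysed rigorously in the mechanical-Brownian-motion literature \citep{Holley:1971:MHP,Durr:1981:MMB}, whose estimates I would invoke to justify replacing the true collision sequence by the independent flux model used in (\ref{genmu}).
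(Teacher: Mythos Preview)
Your proposal is correct and, at its core, takes the same route as the paper: both defer the rigorous step to \cite{Holley:1971:MHP} (and \cite{Durr:1981:MMB}) once the recollision problem is identified. The paper's proof is in fact nothing more than a citation of Holley's theorem together with the remark that a rescaling puts the present $f_\mu$ into Holley's framework; the heuristic verification of the drift and diffusion coefficients is carried out separately in Section~\ref{secfromMDtoBD1D}. There the paper splits collisions into those arriving from the left and from the right, computes one-sided coefficients $\alpha(V)$ and $\beta(V)$ via the flux argument of Lemma~\ref{lemmaboundary}, and then combines them as $\alpha(V)-\alpha(-V)$ and $\sqrt{\beta^2(V)+\beta^2(-V)}$ to recover (\ref{BDVeq}); your generator (\ref{genmu}) with the full flux weight $|v-V|$ is exactly the two-sided version of that same computation, and your limits (\ref{limits}) agree with what the paper obtains after Taylor expansion in (\ref{alphatrunc}) and (\ref{betatrunc}). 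What you add beyond the paper is an explicit articulation of the tightness/martingale-problem skeleton and a clear statement of the exchange trick for bath--bath collisions in one dimension, both of which are precisely the ingredients of Holley's proof that the paper leaves implicit.
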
 

\begin{proof}
This lemma can be proven using the main theorem in \citet{Holley:1971:MHP}
where it is shown that a similar process converges to the Ornstein-Uhlenbeck
process $(\ref{BDVeq})$ for velocities. Although our funcion $f_\mu$
does not satisfy all assumptions of the main theorem of 
\citet{Holley:1971:MHP}, a simple rescaling of our parameters 
leads to a process which is covered by Holley's theorem.
In Section \ref{secfromMDtoBD1D} of this paper, we also
rederive this result as one of the consequences of 
multiscale analysis, see (\ref{fullSDE3}).
\end{proof}

Since the goal of this paper is to study the behaviour of computational
algorithms, we formulate the MD model [A] in a finite domain
$[-L,L]$, i.e. we consider a finite number 
$n \equiv n(t)$ of heat bath particles which are at positions
$x^i \in [-L,L]$ with velocities $v^i \in (-\infty,\infty)$, 
$i = 1,2,\dots,n$. We want to formulate boundary
conditions of our problem so that the spatio-temporal statistics 
in $[-L,L]$ are equivalent to spatio-temporal statistics of the
original unbounded process. The following lemma will be useful for 
designing appropriate boundary conditions.

\begin{lemma} \label{lemmaboundary}
Let $b \in {\mathbb R}$ and $\Delta t > 0$. 
Let us assume that heat bath particles are distributed 
according to the Poisson distribution with density $(\ref{distr1Dx})$ in 
the interval $(-\infty,b)$. Their initial velocities are given
according to $(\ref{distr1Dv})$ and there are no particles in the
interval $(b,\infty)$ at time $t=0$. Then the average number 
of particles in the interval $(b,\infty)$ at time $t=\Delta t$ is 
\begin{equation}
\frac{\gamma(\mu+1)\Delta t}{8}. 
\label{averpartdeltat}
\end{equation}
The positions $x$ and velocities $v$ of these particles are distributed 
according to
\begin{equation}
H(b - x + v \Delta t) \, \lambda_\mu \, f_\mu(v),
\label{distveldeltat}
\end{equation}
where $H(\cdot)$ is the Heaviside step function.
In particular, the positions of the particles at point 
$x \in (b,\infty)$ are distributed at time $t = \Delta t$
according to 
\begin{equation}
\varrho(x;\Delta t,b) 
\equiv
\frac{\lambda_\mu}{2} \, \mathrm{erfc} \left(
\frac{x-b}{\Delta t \, \sigma_\mu \sqrt{2}} \right),
\qquad
\mbox{for}
\;\;
x \in (b,\infty),
\label{distpartdeltat}
\end{equation}
where 
$\mathrm{erfc}(z) = 2/\sqrt{\pi} \int_z^\infty \exp(-s^2) \, \mbox{{\rm d}}s$ 
is the complementary error function.
\end{lemma}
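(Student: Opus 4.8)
The plan is to reduce the evolution to free ballistic transport of a Poisson point process and then read off all three claims as marginals of the transported intensity measure. The observation that makes this legitimate is that every heat bath particle has the same mass $m$; hence an elastic collision between two of them, governed by $(\ref{col1Dmu})$ with the mass ratio equal to one, merely exchanges their two velocities. Since the particles are indistinguishable, the \emph{unlabelled} configuration therefore evolves exactly as if each particle travelled freely along the straight line $x(t) = x_0 + v t$ with its velocity held fixed. Consequently it suffices to push the initial phase-space intensity forward under the free flow over the time interval of length $\Delta t$.

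The initial state is a Poisson point process on phase space with intensity $\lambda_\mu f_\mu(v)\, H(b - x_0)$, using $(\ref{distr1Dx})$ and $(\ref{distr1Dv})$, with no points having $x_0 > b$. Under free flight the point $(x_0,v)$ is mapped to $(x_0 + v\Delta t, v)$, a shear transformation whose Jacobian is one. Pushing the intensity forward and substituting $x_0 = x - v\Delta t$ into the constraint $H(b - x_0)$ immediately gives the phase-space density $(\ref{distveldeltat})$. A Poisson process remains Poisson under a deterministic map, so this intensity also governs particle counts. To obtain $(\ref{averpartdeltat})$ I would integrate $(\ref{distveldeltat})$ over the half space $x > b$: for fixed $v$ the admissible initial positions form the interval $(b - v\Delta t, b)$, which is nonempty only when $v > 0$ and then has length $v\Delta t$, so the expected count equals $\lambda_\mu \Delta t \int_0^\infty v f_\mu(v)\,\mathrm{d}v = \lambda_\mu \Delta t\, \sigma_\mu/\sqrt{2\pi}$. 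Inserting the explicit values of $\lambda_\mu$ and $\sigma_\mu$ from $(\ref{distr1Dx})$--$(\ref{distr1Dv})$ collapses this to $\gamma(\mu+1)\Delta t/8$.

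Finally, for the spatial profile $(\ref{distpartdeltat})$ I would instead integrate the phase-space density over the velocity variable at a fixed $x > b$. The Heaviside factor restricts the integration to $v > (x-b)/\Delta t$, a strictly positive threshold, so that $\varrho(x;\Delta t,b) = \lambda_\mu \int_{(x-b)/\Delta t}^\infty f_\mu(v)\,\mathrm{d}v$, and the standard identity relating a Gaussian tail to the complementary error function yields exactly $(\ref{distpartdeltat})$. The only genuine subtlety in the whole argument is the reduction in the first paragraph: all three formulas are elementary Gaussian integrals once one is entitled to ignore interparticle collisions, and the equal-mass velocity-swap identity is precisely what licenses replacing the true (non-crossing) dynamics by free transport at the level of the unlabelled intensity measure.
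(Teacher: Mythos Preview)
Your proof is correct and follows essentially the same approach as the paper. Two minor differences: the paper's proof takes the free-transport reduction for granted (the velocity-swap observation you spell out appears elsewhere, in the description of step [A2]), whereas you make it explicit here; and the paper derives $(\ref{averpartdeltat})$ by first obtaining the spatial profile $(\ref{distpartdeltat})$ and then integrating $\varrho$ over $x$, while you integrate the phase-space density over $x$ first (obtaining the strip width $v\,\Delta t$ for each $v>0$) and then over $v$---both routes reach the same intermediate quantity $\lambda_\mu \sigma_\mu \Delta t/\sqrt{2\pi}$.
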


\begin{proof}
Particles which are at point $x \in (b,\infty)$ at time $t = \Delta t$
were previously at point $x-v \Delta t$ at time $t=0$. In particular,
there will be nonzero heat bath particles with velocity $v$ at point 
$x$ at time $t = \Delta t$ provided that $x-v \Delta t < b$
which implies (\ref{distveldeltat}). Consequently,
the density of particles which are at point $x \in (b,\infty)$
at time $t = \Delta t$ is
\begin{eqnarray*}
\varrho(x;\Delta t,b)
&=&
\int_{-\infty}^\infty
H(b - x + v \Delta t) \, \lambda_\mu \, f_\mu(v)
\, \mbox{d} v
=
\int_{(x-b)/{\Delta t}}^\infty \lambda_\mu f_\mu(v)
\, \mbox{d} v
\\
&=&
\frac{\lambda_{\mu}}{\sqrt{2 \pi \sigma_\mu^2}}
\int_{(x-b)/{\Delta t}}^\infty
\exp \left( - \frac{v^2}{2 \sigma_\mu^2} \right) \, \mbox{d} v
=
\frac{\lambda_\mu}{2} \, \mbox{erfc} \left(
\frac{x-b}{\Delta t \, \sigma_\mu \sqrt{2}} \right).
\end{eqnarray*} 
Thus we proved (\ref{distpartdeltat}). Integrating this formula
over $x$ in interval $(b,\infty)$, we obtain the average 
number of particles which are in the interval $(b,\infty)$
at time $t = \Delta t$:
$$
\int_b^{\infty} 
\varrho(x;\Delta t,b)
\, \mbox{d} x
=
\frac{\lambda_\mu}{2}
\int_0^{\infty} 
\mbox{erfc} \left(
\frac{z}{\Delta t \, \sigma_\mu \sqrt{2}} \right)
\mbox{d} z
=
\frac{\lambda_\mu \sigma_\mu \Delta t}{\sqrt{2\pi}}.
$$ 
Substituting (\ref{distr1Dx}) for $\lambda_\mu$ and 
(\ref{distr1Dv}) for $\sigma_\mu$, we obtain
(\ref{averpartdeltat}).
\end{proof}

We use Lemma \ref{convergenceelastic} and
Lemma \ref{lemmaboundary} to design a computational 
test for multiscale methods. Since the number $n \equiv n(t)$ 
of heat bath particles in $[-L,L]$ is much larger than the 
number $N$ of large particles, we will focus on models of 
a single large particle, i.e. $N=1$, which is described by its
position $X$ and velocity $V$. We choose a small
time step $\Delta t$. One iteration of the MD algorithm is 
presented in Table \ref{tablealgM1}.
\begin{table}
\framebox{%
\hsize=0.97\hsize
\vbox{
\leftskip 10mm
\parindent -10mm
[A1] \hskip 1.2mm
Compute ``free-flight positions'' of heat bath particles 
and the large particle at time $t+\Delta t$ by: 

\smallskip

\noindent
${\widehat{x}}^{i}(t+\Delta t) = x^i(t) + v^i(t) \, \Delta t$
and ${\widehat{X}}(t+\Delta t) = X(t) + V(t) \, \Delta t.$ 

\smallskip

[A2] \hskip 1.2mm
Compute post-collision velocities by (\ref{col1Dmu})
for every pair of particles which collided. Compute their 
post-collision positions $x^{i}(t+\Delta t)$ 
and $X(+\Delta t)$ by updating their 
``free-flight positions'' ${\widehat{x}}^{i}(t+\Delta t)$ and
${\widehat{X}}(t+\Delta t).$

\smallskip

[A3] \hskip 1.2mm
Terminate trajectories of heat bath particles which 
left the domain $[-L,L]$. Update $n$ accordingly. 

\smallskip

[A4] \hskip 1.2mm
Generate a random number $r_1$ uniformly distributed
in $(0,1)$. \\
If $r_1 < \gamma (\mu + 1) \Delta t/8$, then
increase $n$ by 1, and introduce a new heat bath particle at 
a position sampled according to the probability distribution 
proportional to $\varrho(x;\Delta t,-L)$. Its velocity 
is sampled according to the probability distribution proportional to 
$H(-L - x + v \Delta t) \, f_\mu(v)$.

\smallskip

[A5] \hskip 1.2mm
Generate a random number $r_2$ uniformly distributed
in $(0,1)$. \\
If $r_2<\gamma (\mu + 1) \Delta t/8$, then
increase $n$ by 1, and introduce a new heat bath particle at position
$x^n(t+\Delta t)$ with velocity $v^n(t+\Delta t)$ which are 
sampled according to probability distributions 
(\ref{samplingnewXright}) and (\ref{samplingnewVright}).

\smallskip

[A6] \hskip 1.2mm
Continue with the step [A1] using time $t=t+\Delta t$.

\par \vskip 0.8mm}
}\vskip 1mm
\caption{One iteration of the computer implementation of MD model [A].}
\label{tablealgM1}
\end{table} 
We first compute the positions of all particles at time $t+\Delta t$
in the step [A1] by assuming that particles do 
not interact. Then we use (\ref{col1Dmu}) to incorporate collisions 
in the step [A2]. Since all heat bath
particles have the same mass, the collisions between them result
in exchange of colliding particles' positions and velocities. In particular
the step [A2] can be implemented by sorting 
the heat bath particles during every iteration. All particles which 
left the domain $[-L,L]$ are removed in the step 
[A3]. 

New heat bath particles are introduced in the steps [A4]
and [A5]. We assume that $\Delta t$ 
is chosen so small that (\ref{averpartdeltat}) is much smaller than 1.
Then (\ref{averpartdeltat}) can be interpretted as a probability of 
introducing one particle from the left (resp. right) during one timestep. Using 
Lemma \ref{lemmaboundary}, the new particle will be introduced at 
the (left boundary) position which is sampled according to the probability 
distribution proportional to $\varrho(x;\Delta t,-L)$ in the step [A4]. 
To sample from this probability distribution, 
we scale and shift a random number sampled from the 
complementary error function distribution $\pi \, \mathrm{erfc}(z)$ 
where $z \in (0,\infty)$. An acceptance-rejection algorithm for sampling 
random numbers from $\pi \, \mathrm{erfc}(z)$ is given in Table
\ref{tableerfcsampling}. We use it with the constants $a_1$ and
$a_2$ given by
\begin{equation}
a_1 = 0.532, 
\qquad
\mbox{and}
\qquad
a_2 = 0.814.
\label{valuesa1a2}
\end{equation} 
The values of constants $a_1$ and $a_2$ were computed to maximize the total
acceptance probability of the acceptance-rejection algorithm in 
Table \ref{tableerfcsampling}.
Using (\ref{valuesa1a2}), we accept 86\% of proposed numbers $\zeta_2$.

\begin{table}
\framebox{%
\hsize=0.97\hsize
\vbox{
\leftskip 4mm
\parindent -4mm

$\bullet\;$
Generate a random number $\zeta_1$ uniformly distributed in (0,1).

\smallskip

$\bullet\;$
Compute exponentially distributed random number $\zeta_2$ by
$\zeta_2 = - a_1 \, \log(\zeta_1).$

\smallskip

$\bullet\;$
Generate a random number $\zeta_3$ uniformly distributed in (0,1).

\smallskip

$\bullet\;$
If $\zeta_1 \, \zeta_3 < a_2 \, \mathrm{erfc} (\zeta_2)$,
then choose $\zeta_2$ as a sample from the probability distribution
$\pi \, \mathrm{erfc}(z)$. Otherwise, repeat the algorithm.

\par \vskip 0.8mm}
}\vskip 1mm
\caption{The acceptance-rejection algorithm which is used to sample
random numbers which are distributed according to the probability 
distribution $\pi \, \mathrm{erfc}(z)$ where
$z \in (0,\infty)$. In our simulations, we use constants
$a_1$ and $a_2$ given by (\ref{valuesa1a2}).}
\label{tableerfcsampling}
\end{table} 

A particle introduced close to the right boundary in the step 
[A5] will have its position sampled according 
to the probability distribution
\begin{equation}
C_1 \, \mathrm{erfc} \left(\frac{L-x}{\Delta t \, \sigma_\mu \sqrt{2}} \right),
\qquad
\mbox{for}
\;\;
x \in (-\infty,L),
\label{samplingnewXright}
\end{equation}
where $C_1$ is a normalization constant. The probability 
distribution (\ref{samplingnewXright}) is proportional to 
$\varrho(-x;\Delta t,-L)$ and can be justified using the
same argument as Lemma \ref{lemmaboundary}. 
To sample from the probability distribution (\ref{samplingnewXright}), 
we again use the acceptance-rejection algorithm in 
Table \ref{tableerfcsampling} with parameters $a_1$ and
$a_2$ given by (\ref{valuesa1a2}). In the step 
[A5], we also sample the velocity 
$v \in {\mathbb R}$ of the new particle using the truncated 
Gaussian distribution
\begin{equation}
C_2 \, H(x - v \Delta t - L) \, f_\mu(v),
\label{samplingnewVright}
\end{equation}
where $C_2$ is a normalization constant.
To sample random numbers according to the truncated normal 
distributions in the steps [A4] and [A5], we use an
acceptance-rejection algorithm which is derived as
Proposition 2.3 in \cite{Robert:1995:STN}.

\newcommand{\picturesAB}[4]{
\centerline{
\hskip #4
\raise #3 \hbox{\raise 0.9mm \hbox{(a)}}
\hskip -8mm
\epsfig{file=#1,height=#3}
\raise #3 \hbox{\raise 0.9mm \hbox{(b)}}
\hskip -8mm
\epsfig{file=#2,height=#3}
}}

In Figure \ref{figure1}, we present illustrative results computed
by the algorithm [A1]--[A6].
We use $\mu = 10^3,$ $\gamma = 10$ and $D=1$. 
We initialize the position and velocity of the heavy particle
as $X(0) = 0$ and $V(0) = 0$ and we use the algorithm
[A1]--[A6] with time step $\Delta t = 10^{-7}$
in the interval $[-L,L]$ where $L=20$. 
\begin{figure}[t]
\picturesAB{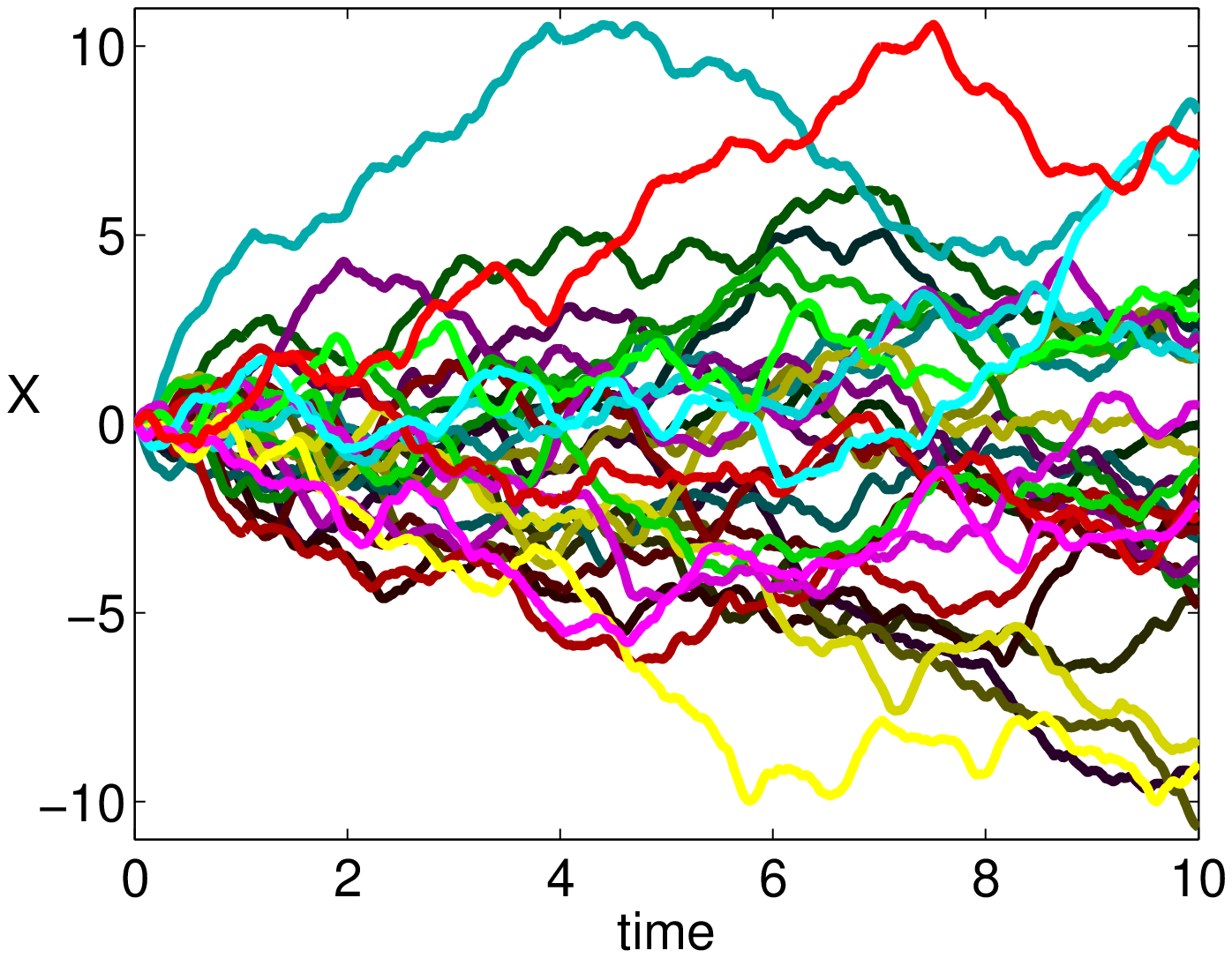}{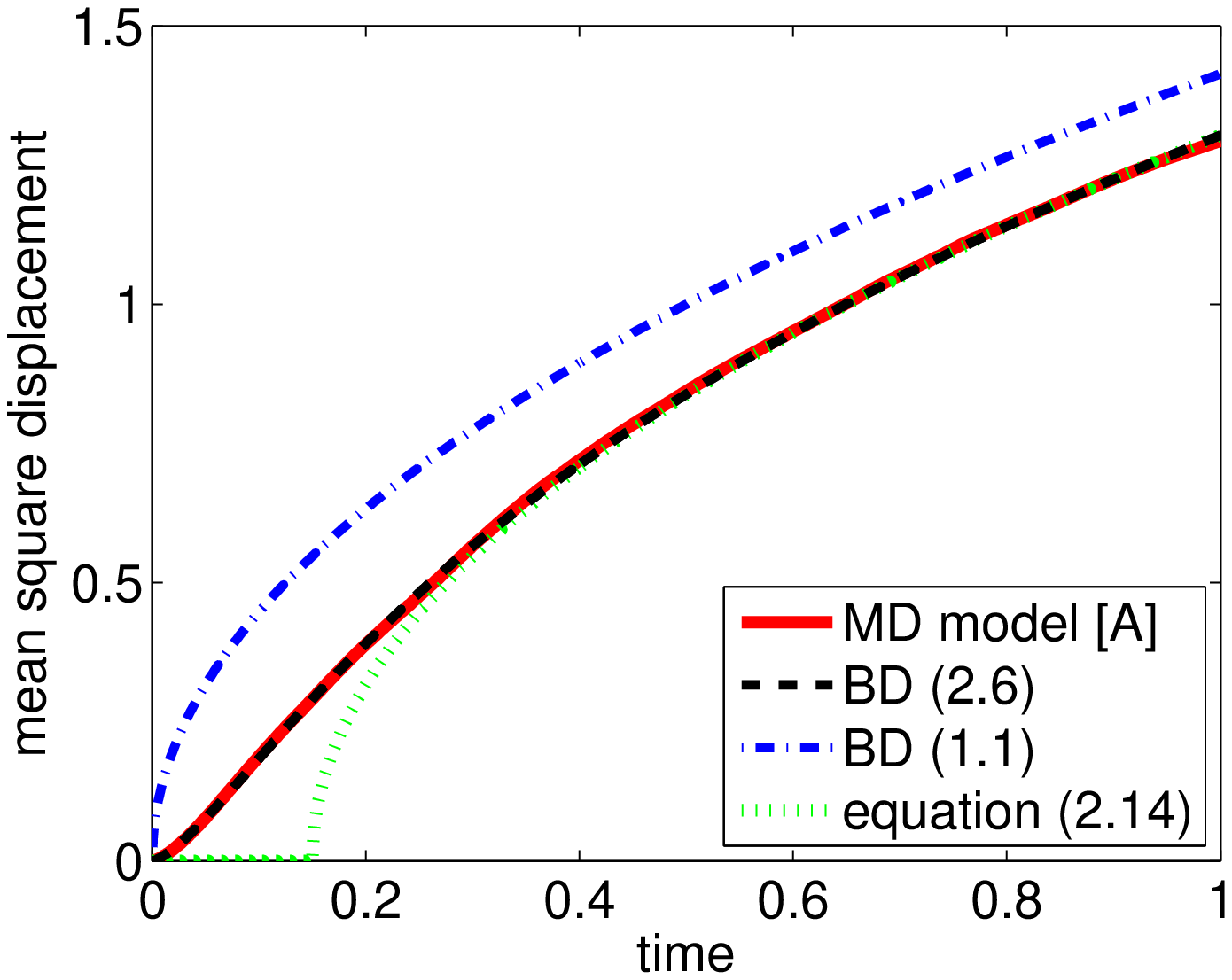}{5.3cm}{3mm}
\caption{(a) {\it Thirty illustrative trajectories of the heavy
particle computed by the MD algorithm {\rm [A1]}--{\rm [A6]}.}
(b) {\it The mean square displacement computed by
$10^3$ realizations of the algorithm {\rm [A1]}--{\rm [A6]}
(red solid line). The MD results are compared with BD results:
equation $(\ref{eqmsd15})$ (black dashed line), 
$\sqrt{2 D \, t}$ (blue dot-dashed line)
and equation $(\ref{eq2p14})$ (green dotted line). \hfill\break
We use $\mu = 10^3,$ $\gamma = 10$, $D=1$, $\Delta t = 10^{-7}$,
$L=20$, $X(0) = 0$ and $V(0) = 0$.}
}
\label{figure1}
\end{figure}
In Figure \ref{figure1}(a), we present 30 illustrative trajectories
of the heavy particle $X(t)$ computed for
$t \in [0,10]$. The mean square displacement given by the MD
model [A] is plotted in Figure \ref{figure1}(b) as the red solid
line. To illustrate the limiting result in Lemma \ref{convergenceelastic}, 
we also plot the mean square displacement corresponding to the limiting
solution $X$ of (\ref{evollimitXV}). It can be analytically computed as
\begin{equation}
\sqrt{{\mathbb E} 
\left[ 
\left( X(t)-X(0) \right)^2
\right]}
=
\sqrt{2 D \, t 
-
\frac{3 D}{\gamma}
+ 
\frac{4 D \, \exp [- \gamma t]}{\gamma} 
-
\frac{D \, \exp [- 2 \gamma t]}{\gamma}
},
\label{eqmsd15}
\end{equation}
where ${\mathbb E}[\cdot]$ denotes the expected value.
It is plotted as the black dashed line in Figure \ref{figure1}(b).
We also plot the mean square displacement corresponding to 
the overdamped limit (\ref{BDSDE}), i.e. $\sqrt{2 D \, t}$,
as the blue dot-dashed line in Figure \ref{figure1}(b). If we neglect
the exponential terms in (\ref{eqmsd15}), we obtain
\begin{equation}
\sqrt{{\mathbb E} 
\left[ 
\left( X(t)-X(0) \right)^2
\right]}
\approx
\sqrt{2 D \, 
\left( t 
-
\frac{3}{2 \gamma}
\right)}.
\label{eq2p14}
\end{equation}
This approximation is plotted in Figure \ref{figure1}(b) as the green
dotted line. We will use (\ref{eq2p14}) later in Section \ref{bioapplication}
to couple the overdamped BD model (\ref{BDSDE}) with MD simulations.

\section{Three-dimensional MD models [B] and [C]}
\label{secMDmodelBC}

MD models [B] and [C] are three-dimensional generalizations of the MD 
model [A]. They are described in terms of positions ${\mathbf x}^i$ 
and velocities ${\mathbf v}^i$, $i=1,2,3,\dots$, of heat bath particles,
and positions ${\mathbf X}^i_\mu = [X_{\mu;1}^i,X_{\mu;2}^i,X_{\mu;3}^i]$ 
and velocities ${\mathbf V}^i_\mu = [V_{\mu;1}^i,V_{\mu;2}^i,V_{\mu;3}^i]$, 
$i=1,2,\dots,N$, of heavy particles of mass $M \gg m$, where $m$ is the mass
of a heat bath particle. We again define $\mu$ by (\ref{defmu}).
We will denote by $R$ the radius of a heavy particle.

MD models [B] and [C] are both based on elastic collisions of heavy 
molecules (balls with mass $M$ and radius $R$) with point
bath particles with masses $m$. Since the collisions are without
friction, conservation of momentum and energy then yields the
following generalization of formulae (\ref{col1Dmu}) for
post-collision velocities \citep{Durr:1981:MMB}
\begin{eqnarray}
\left[\widetilde{\mathbf V}^{i}_\mu \right]^\prime 
&=& 
\left[ {\mathbf V}^{i}_\mu \right]^\parallel
+ 
\frac{\mu - 1}{\mu + 1} \, \left[ {\mathbf V}^{i}_\mu \right]^\perp
 + 
\frac{2}{\mu + 1} \, \left[{\mathbf v}^j \right]^\perp,
\label{eq11}
\\
\left[\widetilde{\mathbf v}^j\right]^\prime 
&=& 
\left[ {\mathbf v}^j \right]^\parallel
+
\frac{1 - \mu}{\mu + 1} \, \left[{\mathbf v}^j \right]^\perp
 + 
\frac{2 \mu}{\mu + 1} \, \left[ {\mathbf V}^{i}_\mu \right]^\perp,
\label{eq12}
\end{eqnarray}
where ${\mathbf v}^j$ is the velocity of the heat bath molecule
which collided with the $i$-th heavy molecule, tildes denote 
post-collision velocities, superscripts $\perp$ denote projections of 
velocities on the line through the centre of the molecule and the 
collision point on its surface, and superscripts $\parallel$ denote
tangential components. 

\subsection{MD model {\rm [B]}}
\label{subsecMDmodelB}

\noindent
MD model [B] will use the normal distribution for velocities of heat
bath particles. The following lemma generalizes 
Lemma \ref{convergenceelastic} to the three-dimensional MD model [B].

\begin{lemma}\label{3Dlemmaexp}
Let $\gamma>0$, $D>0$ and $R>0$. Let us consider the MD model {\rm [B]} 
where heat bath particles are distributed according
to the Poisson distribution with density
\begin{equation}
\lambda_\mu =
\frac{3}{8 R^2} \sqrt{\frac{(\mu + 1) \gamma}{2 \pi D}}.
\label{lambda3Dexp}
\end{equation}
Let the velocities of heat bath particles are distributed according
to
\begin{equation}
f_\mu({\mathbf v})
= \frac{1}{\sigma_\mu^3(2\pi)^{3/2}} 
\exp \left[ 
- \frac{v_1^2+v_2^2+v_3^2}{2 \sigma_\mu^2} \right],
\quad \mbox{where} \quad
\sigma_\mu = \sqrt{(\mu + 1) \, D \, \gamma }
\label{fvel3Dexp}
\end{equation}
and ${\mathbf v} = [v_1,v_2,v_3].$ We will consider one heavy molecule 
in such a heat bath, i.e. $N=1$. Then the position and
velocity of the heavy molecule, ${\mathbf X}_\mu$ and ${\mathbf V}_\mu$,
converge (in the sense of distribution) to the solution of
$(\ref{BDXeq})$--$(\ref{BDVeq})$ in the limit $\mu \to \infty.$ 
\end{lemma}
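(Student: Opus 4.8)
The plan is to reduce the three-dimensional convergence statement to the one-dimensional result already established in Lemma \ref{convergenceelastic}, exploiting the rotational symmetry built into the collision rule $(\ref{eq11})$--$(\ref{eq12})$ and into the isotropic distributions $(\ref{lambda3Dexp})$--$(\ref{fvel3Dexp})$. The key observation is that in each collision only the normal components $[\,\cdot\,]^\perp$ of the velocities are exchanged, exactly as in the one-dimensional formulae $(\ref{col1Dmu})$, while the tangential components $[\,\cdot\,]^\parallel$ of the heavy particle are unaffected. So the first step is to write down the effective collision kernel: a bath particle at relative position on the sphere of radius $R$ strikes the heavy molecule only if the normal component of the relative velocity points inward, and the rate of such collisions, integrated over the sphere and over the Maxwellian $(\ref{fvel3Dexp})$, should reproduce — component by component — the same drift and diffusion coefficients that the one-dimensional analysis produced. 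This is the calculation that fixes the prefactor $3/(8R^2)$ in $(\ref{lambda3Dexp})$: the factor $R^2$ accounts for the collision cross-section (the surface element $R^2\,\mathrm{d}\Omega$), and the factor $3/8$ arises from averaging the square of the normal component of the impact direction over the visible hemisphere.

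The second step is to pass to the limit $\mu \to \infty$ in the generator of the Markov process $({\mathbf X}_\mu, {\mathbf V}_\mu)$. Following the strategy sketched for Lemma \ref{convergenceelastic} (and ultimately the method of \citet{Holley:1971:MHP,Durr:1981:MMB}), I would compute the infinitesimal action of the collision dynamics on smooth test functions $\phi({\mathbf X},{\mathbf V})$ and Taylor-expand in the small velocity increment transferred per collision, which from $(\ref{eq11})$ is of order $2/(\mu+1)$ times the bath velocity. The bath velocities scale like $\sigma_\mu = \sqrt{(\mu+1)D\gamma}$, so each increment is $O(\mu^{-1/2})$, while the collision frequency (proportional to $\lambda_\mu \sigma_\mu \sim \mu$) is $O(\mu)$; the product gives an $O(1)$ diffusive contribution, and the asymmetry between the $(\mu-1)/(\mu+1)$ and $2/(\mu+1)$ coefficients in $(\ref{eq11})$ produces the $O(1)$ friction term $-\gamma V_i$. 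The claim is that the leading-order generator converges to
\begin{equation}
\mathcal{L}\phi = \sum_{i=1}^{3}\left( V_i\,\partial_{X_i}\phi - \gamma V_i\,\partial_{V_i}\phi + \gamma^2 D\,\partial^2_{V_i}\phi \right),
\label{limitgen3D}
\end{equation}
which is precisely the generator of $(\ref{BDXeq})$--$(\ref{BDVeq})$. The angular integration over the sphere is what decouples the three velocity components and simultaneously supplies the normalizing constants in $(\ref{lambda3Dexp})$--$(\ref{fvel3Dexp})$, so that the diffusion matrix comes out isotropic with the single coefficient $\gamma^2 D$.

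The hardest part, I expect, is the bookkeeping of the spherical collision geometry — correctly identifying which bath particles can collide during a short interval (those whose relative velocity has inward normal component), integrating the square of the transferred normal velocity against both the Maxwellian and the solid-angle measure on the sphere, and verifying that the cross terms and higher-order moments vanish in the limit so that only drift and diffusion survive. This is where the specific constants $3/(8R^2)$ and $\sqrt{(\mu+1)\gamma/(2\pi D)}$ must fall out exactly; getting the normal-component average right (the $3/8$) and the cross-section right (the $R^2$) is the crux. Once the generator $(\ref{limitgen3D})$ is obtained, convergence in distribution follows from standard martingale/generator-convergence arguments (tightness plus identification of the limit via the martingale problem), exactly as in the one-dimensional case, so no new probabilistic machinery beyond \citet{Holley:1971:MHP} is needed.
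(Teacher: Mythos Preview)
Your proposal is substantially more ambitious than the paper's own argument, which is a two-line citation: the distributions $(\ref{lambda3Dexp})$--$(\ref{fvel3Dexp})$ satisfy the hypotheses of Theorem~2.1 in \citet{Durr:1981:MMB}, and one simply evaluates the relevant moments of $f_\mu$ to check that the limiting coefficients come out as $\gamma$ and $\gamma\sqrt{2D}$. The paper also points to Section~\ref{secfromMDtoBD3D} for an independent (heuristic) rederivation, where the drift and noise coefficients are obtained by integrating the per-collision velocity change over the sphere surface (Lemma~\ref{3Dlemmaypoint} and the formulae following it); setting $X_1=R$ there recovers $(\ref{BDVeq})$. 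What you sketch --- computing the generator of $({\mathbf X}_\mu,{\mathbf V}_\mu)$, Taylor-expanding in the $O(\mu^{-1/2})$ velocity increments, and invoking martingale-problem convergence --- is essentially a reconstruction of the D\"urr--Goldstein--Lebowitz proof itself rather than an application of it, so it is correct in outline but far heavier than what is required here. One caution on your framing: the opening claim that the result \emph{reduces} to the one-dimensional Lemma~\ref{convergenceelastic} is misleading. The one-dimensional model~[A] has interacting bath particles (they cannot pass each other), whereas in three dimensions the bath particles are non-interacting points; the collision rule $(\ref{eq11})$ does match $(\ref{col1Dmu})$ along the normal direction, but the processes are structurally different and there is no genuine reduction --- your actual computation (collision kernel, angular integration, generator limit) is a direct three-dimensional argument, not a corollary of the one-dimensional lemma.
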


\begin{proof}
\noindent
The MD model [B] and heat bath distributions (\ref{lambda3Dexp})
and (\ref{fvel3Dexp}) satisfy the assumptions of Theorem 2.1 in 
\cite{Durr:1981:MMB}. Their theorem expresses the limiting 
equation of a process with given $\lambda_\mu$ and 
$f_\mu({\mathbf v})$ in terms of moments of $f_\mu$.
These moments can be analytically evaluated to verify the 
statement of Lemma \ref{3Dlemmaexp}. We will also rederive
this result in Section \ref{secfromMDtoBD3D} as a consequence
of the analysis of multiscale methods.
\end{proof}

Lemma \ref{3Dlemmaexp} can be viewed as a different formulation 
of Theorem 2.1 in \cite{Durr:1981:MMB}. They were interested
in the limit $m \to 0$ which is equivalent to $\mu \to \infty$.
Considering the scaling $m^{3/2} f({\mathbf v} m^{1/2})$
of the velocity distribution of heat bath particles
(with density scaled as $\lambda/m^{1/2}$), they derived 
formulae for $\gamma$ and $D$ in terms of moments of $f$
and $\lambda$. To formulate Lemma \ref{3Dlemmaexp}, we
inverted their results by deriving the appropriate distributions 
(\ref{lambda3Dexp}) and (\ref{fvel3Dexp}) which lead to the
limiting BD model with a given $D$ and $\gamma$. 

\subsection{MD model  {\rm [C]}}
\label{subsecMDmodelC}

\noindent
In Lemma \ref{3Dlemmaexp} we used the normal distribution
for velocities (\ref{fvel3Dexp}). Another option is to use
heat bath particles with fixed speed as it is done in 
the following Lemma \ref{3Dlemmafixedspeed}. We denote the 
resulting MD model as the MD model [C].
 
\begin{lemma}\label{3Dlemmafixedspeed}
Let $\gamma>0$, $D>0$ and $R>0$. Let us consider the MD model {\rm [C]} 
where heat bath particles are distributed according
to the Poisson distribution with density
\begin{equation}
\lambda_\mu =
\frac{3}{8 \pi R^2} \sqrt{\frac{(\mu + 1) \gamma}{D}}.
\label{lambda3Dfixedspeed}
\end{equation}
Let the velocities of heat bath particles are distributed according
to
\begin{equation}
f_\mu({\mathbf v})
= \frac{1}{4 \pi \sigma_\mu^2} 
\, \delta \left( \sqrt{v_1^2+v_2^2+v_3^2} - \sigma_\mu \right),
\quad \mbox{where} \quad
\sigma_\mu = 2 \sqrt{(\mu + 1) \, D \, \gamma}
\label{fvel3Dfixedspeed}
\end{equation}
and $\delta$ is a Dirac distribution.
Let us consider one heavy molecule in this heat bath at position 
${\mathbf X}_\mu$ with velocity ${\mathbf V}_\mu$. 
Then ${\mathbf X}_\mu$ and ${\mathbf V}_\mu$ converge (in the sense 
of distribution) to the solution of $(\ref{BDXeq})$--$(\ref{BDVeq})$ 
in the limit $\mu \to \infty.$ 
\end{lemma}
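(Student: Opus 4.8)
The plan is to prove Lemma~\ref{3Dlemmafixedspeed} by exactly the route used for Lemma~\ref{3Dlemmaexp}: invoke Theorem~2.1 of \cite{Durr:1981:MMB}, which expresses the limiting friction and diffusion coefficients of the heavy molecule in terms of the density $\lambda_\mu$ and a finite set of velocity moments of $f_\mu$. Model~[C] uses the same collision rule (\ref{eq11})--(\ref{eq12}) and the same hard-sphere geometry as model~[B]; only the bath distribution changes. Hence the whole lemma reduces to checking that the surface distribution (\ref{fvel3Dfixedspeed}), together with the density (\ref{lambda3Dfixedspeed}), feeds the same values of $\gamma$ and $D$ into D\"urr's formulas as the Gaussian bath of Lemma~\ref{3Dlemmaexp} did.

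First I would identify which moments enter. By (\ref{eq11}) the velocity increment of the heavy particle per collision is $\frac{2}{\mu+1}\big([\mathbf{v}^j]^\perp-[\mathbf{V}^i_\mu]^\perp\big)$, and the collision frequency against a sphere of radius $R$ is proportional to $\lambda_\mu R^2$ times an angular flux factor times the approaching speed. The $[\mathbf{V}^i_\mu]^\perp$ term produces the friction and the $[\mathbf{v}^j]^\perp$ term the noise, so after the geometric (angular) integrals --- which are identical for models~[B] and~[C] because they involve only the collision geometry, not the speed law --- the friction is governed by $\lambda_\mu\,\langle|\mathbf{v}|\rangle$ and the velocity-diffusion by $\lambda_\mu\,\langle|\mathbf{v}|^3\rangle/(\mu+1)^2$. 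Next I would evaluate these for (\ref{fvel3Dfixedspeed}): since the speed is deterministic, $\langle|\mathbf{v}|\rangle=\sigma_\mu$ and $\langle|\mathbf{v}|^3\rangle=\sigma_\mu^3$, and the prefactor $3/(8\pi R^2)$ in (\ref{lambda3Dfixedspeed}) together with $\sigma_\mu=2\sqrt{(\mu+1)D\gamma}$ is tuned precisely so that $\lambda_\mu\sigma_\mu=3(\mu+1)\gamma/(4\pi R^2)$ and $\lambda_\mu\sigma_\mu^3/(\mu+1)^2=3\gamma^2D/(\pi R^2)$. A short computation with the Maxwell moments of the Gaussian bath confirms that these same two quantities arise from (\ref{fvel3Dexp})--(\ref{lambda3Dexp}) of Lemma~\ref{3Dlemmaexp}, so the two models land on identical $\gamma$ and $D$ and hence on the same limit (\ref{BDXeq})--(\ref{BDVeq}).

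The main obstacle is that the distribution (\ref{fvel3Dfixedspeed}) is a surface (Dirac) measure on the sphere $|\mathbf{v}|=\sigma_\mu$ and therefore need not satisfy the regularity hypotheses of Theorem~2.1 in \cite{Durr:1981:MMB}, which are stated for bath laws possessing a density. I would handle this as the one-dimensional case was handled in Lemma~\ref{convergenceelastic}, by a regularization argument: replace the shell by a family of absolutely continuous laws $f_\mu^{(\varepsilon)}$ (for instance thin spherical Gaussian shells peaked at $|\mathbf{v}|=\sigma_\mu$) that satisfy D\"urr's assumptions and converge weakly to (\ref{fvel3Dfixedspeed}) as $\varepsilon\to0$. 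Because only finitely many moments control the limiting $(\gamma,D)$ and these depend continuously on the law, one recovers the shell result by letting $\varepsilon\to0$; the two limits commute since the relevant moments converge uniformly in $\mu$. As an independent cross-check that altogether avoids the regularity question, I would point to the multiscale derivation of Section~\ref{secfromMDtoBD3D}, which obtains the limiting system (\ref{BDXeq})--(\ref{BDVeq}) directly from the collision kernel.
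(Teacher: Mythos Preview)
Your approach is correct and is essentially the same as the paper's: invoke Theorem~2.1 of \cite{Durr:1981:MMB} and verify that the moments of the bath law produce the stated $\gamma$ and $D$. The paper's proof, however, is a single sentence --- it simply observes that D\"urr's theorem is applicable to \emph{any spherically symmetric velocity distribution which has at least four finite moments}. The shell measure (\ref{fvel3Dfixedspeed}) is spherically symmetric and, with deterministic speed $\sigma_\mu$, trivially has all moments finite, so the theorem applies directly.

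Your main detour is therefore unnecessary: the regularity concern you raise (that a surface Dirac measure has no density) and the mollification $f_\mu^{(\varepsilon)}$ with its attendant commutation-of-limits argument can be dropped. The hypothesis you need is a moment hypothesis, not absolute continuity. Your explicit moment bookkeeping ($\lambda_\mu\sigma_\mu$ for friction, $\lambda_\mu\sigma_\mu^3/(\mu+1)^2$ for diffusion) is fine and mirrors what the paper does for Lemma~\ref{3Dlemmaexp}, but the regularization step adds work and a delicate interchange of limits that the actual theorem does not require.
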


Lemma \ref{3Dlemmafixedspeed} can again be proven using Theorem 2.1 
in \cite{Durr:1981:MMB} which is applicable to any spherically 
symmetric velocity distribution which has at least four finite moments.

\subsection{Boundary conditions for MD models {\rm [B]} and {\rm [C]}}
\label{subsecMDmodelBCboundary}

\noindent
Next, we generalize Lemma \ref{lemmaboundary} to the three-dimensional
case. This will help us to specify boundary conditions for simulations 
which use the MD models [B] and [C] in finite domains. 

\begin{lemma} \label{lemmaboundaryBC}
Let $b \in {\mathbb R}$ and $\Delta t > 0$. 
Let us assume that heat bath particles are distributed 
according to the Poisson distribution with density 
$\lambda_\mu$ in the half space $(-\infty,b) \times {\mathbb R}^2$; 
their initial velocities are distributed according to 
$f_\mu({\mathbf v})$ and there are no particles in 
the half space $(b,\infty) \times {\mathbb R}^2$ 
at time $t=0$. Let us assume that $\lambda_\mu$ and $f_\mu({\mathbf v})$ are
either given by $(\ref{lambda3Dexp})$--$(\ref{fvel3Dexp})$
(MD model {\rm [B]}), 
or by $(\ref{lambda3Dfixedspeed})$--$(\ref{fvel3Dfixedspeed})$
(MD model {\rm [C]}).

Then the positions ${\mathbf x}$ and velocities 
${\mathbf v}$ of heat bath particles in the half space 
$(b,\infty)  \times {\mathbb R}^2$ are distributed at time 
$t = \Delta t$ according to 
\begin{equation}
H(b - x_1 + v_1 \Delta t) \, \lambda_\mu \, f_\mu({\mathbf v}),
\label{distveldeltatBC}
\end{equation}
and the average number of particles in the semi-infinite cuboid 
$(b,\infty)  \times (0,1)^2$ at time $t=\Delta t$ is 
\begin{equation}
\frac{3 \gamma (\mu + 1) \Delta t}{16 \pi R^2}.
\label{averpartdeltatBC}
\end{equation}
\end{lemma}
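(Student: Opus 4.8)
The plan is to lift the one-dimensional argument of Lemma \ref{lemmaboundary} to three dimensions, exploiting the fact that the only direction relevant to the boundary at $x_1 = b$ is the first coordinate. First I would establish the phase-space density $(\ref{distveldeltatBC})$ by the same free-transport reasoning as before. In the absence of a heavy molecule near the boundary the heat bath points do not collide, so each moves ballistically: a particle located at $\mathbf{x}$ with velocity $\mathbf{v}$ at time $t = \Delta t$ was at $\mathbf{x} - \mathbf{v}\,\Delta t$ at $t = 0$. For it to contribute to the half space $(b,\infty)\times\mathbb{R}^2$ it must have originated in the populated half space, i.e. $x_1 - v_1\,\Delta t < b$, which is exactly the constraint encoded by $H(b - x_1 + v_1\,\Delta t)$. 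Because the initial phase-space density is the spatially homogeneous product $\lambda_\mu f_\mu(\mathbf{v})$ and free flight conserves phase-space density along trajectories, the density at $(\mathbf{x},\mathbf{v})$ at time $\Delta t$ equals $\lambda_\mu f_\mu(\mathbf{v})$ when the constraint holds and vanishes otherwise, giving $(\ref{distveldeltatBC})$.

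Next I would obtain $(\ref{averpartdeltatBC})$ by integrating $(\ref{distveldeltatBC})$ over the cuboid $(b,\infty)\times(0,1)^2$ and over all velocities $\mathbf{v}\in\mathbb{R}^3$. Since the integrand is independent of $x_2$ and $x_3$, the transverse integrals contribute a factor of one, and this measurement on a unit cross-section is precisely why the semi-infinite cuboid (rather than the whole half space) is the natural object. Interchanging the order of integration, the $x_1$-integral of $H(b - x_1 + v_1\,\Delta t)$ over $(b,\infty)$ equals $v_1\,\Delta t$ when $v_1 > 0$ and vanishes otherwise. The average number therefore reduces to the one-sided velocity flux
\[
\lambda_\mu\,\Delta t \int_{v_1>0} v_1\, f_\mu(\mathbf{v})\,\mathrm{d}\mathbf{v},
\]
so it only remains to evaluate this flux for each of the two velocity distributions.

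For the MD model [B] the Gaussian $(\ref{fvel3Dexp})$ factorizes: the $v_2$- and $v_3$-integrals are normalized Gaussians equal to one, while the $v_1$-integral is the standard one-sided first moment $\int_0^\infty v_1\,(\sigma_\mu\sqrt{2\pi})^{-1}\exp(-v_1^2/2\sigma_\mu^2)\,\mathrm{d}v_1 = \sigma_\mu/\sqrt{2\pi}$. For the MD model [C] I would interpret $f_\mu$ from $(\ref{fvel3Dfixedspeed})$ as a uniform surface measure on the sphere of radius $\sigma_\mu$; writing $\mathbf{v}$ in spherical coordinates with polar axis along $v_1$, the Dirac distribution collapses the radial integral and the flux becomes $(4\pi)^{-1}\sigma_\mu \int_0^{2\pi}\mathrm{d}\phi\int_0^{\pi/2}\cos\theta\,\sin\theta\,\mathrm{d}\theta = \sigma_\mu/4$. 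Substituting $\lambda_\mu$ and $\sigma_\mu$ from $(\ref{lambda3Dexp})$--$(\ref{fvel3Dexp})$, respectively $(\ref{lambda3Dfixedspeed})$--$(\ref{fvel3Dfixedspeed})$, into $\lambda_\mu\,\Delta t$ times the appropriate flux constant then yields the single formula $(\ref{averpartdeltatBC})$ in both cases.

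The computations above are routine; the only point requiring genuine care is the model-[C] flux, where $f_\mu$ is singular and $\int_{v_1>0} v_1 f_\mu\,\mathrm{d}\mathbf{v}$ must be read as an integral over the upper hemisphere against the surface measure. The mild conceptual interest lies in the observation that the rather different normalizations of the two models are calibrated so that $\lambda_\mu$ times the model-dependent flux constant ($\sigma_\mu/\sqrt{2\pi}$ for [B], $\sigma_\mu/4$ for [C]) collapses to the same value; this is the three-dimensional manifestation of the model-independence of the boundary insertion rate, and it is what permits a single boundary condition $(\ref{averpartdeltatBC})$ to serve both MD models.
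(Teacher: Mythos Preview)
Your proof is correct and follows essentially the same strategy as the paper: justify $(\ref{distveldeltatBC})$ by free transport, then integrate it over the cuboid and over velocities, treating models [B] and [C] separately. The only organizational difference is the order of integration for model~[C]: the paper integrates over $\mathbf{v}$ first to obtain the explicit position density $\varrho(\mathbf{x};\Delta t,b)=\tfrac{\lambda_\mu}{2\sigma_\mu}\bigl(\sigma_\mu-(x_1-b)/\Delta t\bigr)_+$ and then integrates over $x_1$, whereas you integrate over $x_1$ first to reduce both cases to the one-sided flux $\lambda_\mu\,\Delta t\int_{v_1>0} v_1 f_\mu(\mathbf{v})\,\mathrm{d}\mathbf{v}$ and then evaluate that flux. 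Your route is slightly more streamlined for the purpose of the lemma; the paper's route has the side benefit of exhibiting the position density explicitly, which is useful elsewhere for sampling newly inserted particles.
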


\begin{proof}
Formula (\ref{distveldeltatBC}) is a generalization of formula 
(\ref{distveldeltat}) in Lemma \ref{lemmaboundary} and can
be justified using the same arguments. To prove (\ref{averpartdeltatBC}),
we will distinguish two cases. 

First, let us consider that 
$\lambda_\mu$ and $f_\mu({\mathbf v})$ are given by 
$(\ref{lambda3Dexp})$--$(\ref{fvel3Dexp})$.
Integrating (\ref{distveldeltatBC}) over positions and
velocities (see the proof of Lemma \ref{lemmaboundary}), 
we conclude that the average number of particles
in the semi-infinite cuboid $(b,\infty)  \times (0,1)^2$
at time $t=\Delta t$ is in the case (a) equal to
$$
\frac{\lambda_\mu \sigma_\mu \Delta t}{\sqrt{2\pi}}
=
\frac{3 \gamma (\mu + 1) \Delta t}{16 \pi R^2}
\label{averpartdeltatBCal}
$$
which is the formula (\ref{averpartdeltatBC}). 

Next, let us consider that 
$\lambda_\mu$ and $f_\mu({\mathbf v})$ are given by
$(\ref{lambda3Dfixedspeed})$--$(\ref{fvel3Dfixedspeed})$.
Integrating (\ref{distveldeltatBC}) with respect of
${\mathbf v}$, we get the density of particles
at ${\mathbf x} \in (b,\infty)  \times {\mathbb R}^2$ 
at time $t=\Delta t$:
\begin{eqnarray}
\varrho({\mathbf x};\Delta t,b)
&=&
\int_{{\mathbb R}^3}
H(b - x_1 + v_1 \Delta t) \, \lambda_\mu \, f_\mu({\mathbf v})
\, \mbox{d} {\mathbf v}
\nonumber\\
&=&
\frac{\lambda_\mu}{4 \pi \sigma_\mu^2} \,
\int_{(x_1-b)/{\Delta t}}^\infty 
\left( \int_{{\mathbb R}^2}
\delta \left( \sqrt{v_1^2+v_2^2+v_3^2} - \sigma_\mu \right)
\, \mbox{d} v_2\, \mbox{d} v_3\, \right) \mbox{d} v_1
\nonumber\\
&=&
\frac{\lambda_\mu}{2 \sigma_\mu} \,
\left( \sigma_\mu - \frac{x_1-b}{\Delta t} \right)_+
\label{rofixedspeed}
\end{eqnarray}
where $(\cdot)_+$ denotes a positive part. Integrating this formula
over ${\mathbf x}$ in the semi-infinite cuboid 
$(b,\infty)  \times (0,1)^2$, we obtain
$$
\int_{(b,\infty)  \times (0,1)^2}
\varrho({\mathbf x};\Delta t,b)
\, \mbox{d} {\mathbf x}
=
\frac{\lambda_\mu}{2 \sigma_\mu} \,
\int_b^{\infty} 
\left( \sigma_\mu - \frac{x_1-b}{\Delta t} \right)_+
\, \mbox{d} x_1
$$
$$
=
\frac{\lambda_\mu}{2 \sigma_\mu} \,
\int_0^{\sigma_\mu \Delta t} 
\left( \sigma_\mu - \frac{x_1}{\Delta t} \right)
\, \mbox{d} x_1
=
\frac{\lambda_\mu \sigma_\mu \Delta t}{4}.
$$ 
Substituting (\ref{lambda3Dfixedspeed}) for $\lambda_\mu$ and 
(\ref{fvel3Dfixedspeed}) for $\sigma_\mu$, we obtain
(\ref{averpartdeltatBC}).

\end{proof}

Lemma \ref{lemmaboundaryBC} can be used to specify boundary conditions 
for simulations of the MD models [B] and [C] in finite domains as we
did for the one-dimensional case in Lemma \ref{lemmaboundary}.
In Section \ref{secfromMDtoBD3D}, we will use Lemma \ref{lemmaboundaryBC}
to develop and analyse multiscale approaches which can efficiently 
and accurately compute results with an MD-level of detail in a 
(relatively small) subdomain $\Omega_D \subset \Omega$ 
by using coarser BD simulations in the remainder. The geometry of
the desired multiscale method is formulated using (\ref{geom1})
where an MD model is used in $\Omega_D$, a coarser BD model is 
used in $\Omega_C$ and these models are coupled across the interface $I$. 
The situation is schematically shown in Figure \ref{figure2}(d) which 
presents a two-dimensional version of our multiscale set up. Here, 
blue point particles describe heat bath molecules which are used 
in $\Omega_D$. Large biomolecules of interest are denoted as 
grey circles. They are simulated using BD in $\Omega_C$. 
The red line denotes interface $I$. 

\begin{figure}[t]
\centerline{
\raise 1cm \hbox{\epsfig{file=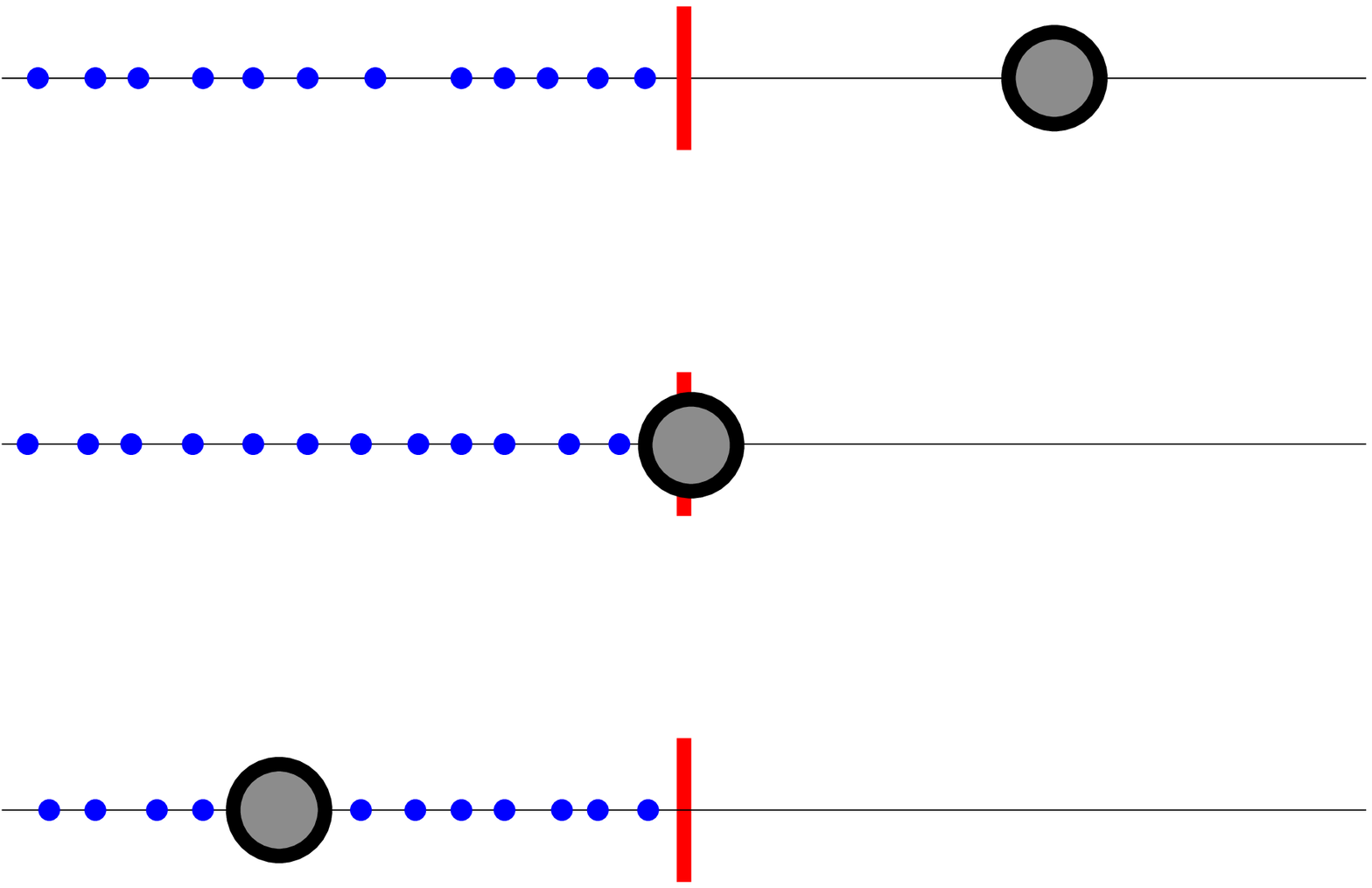,height=3.3cm}}
\epsfig{file=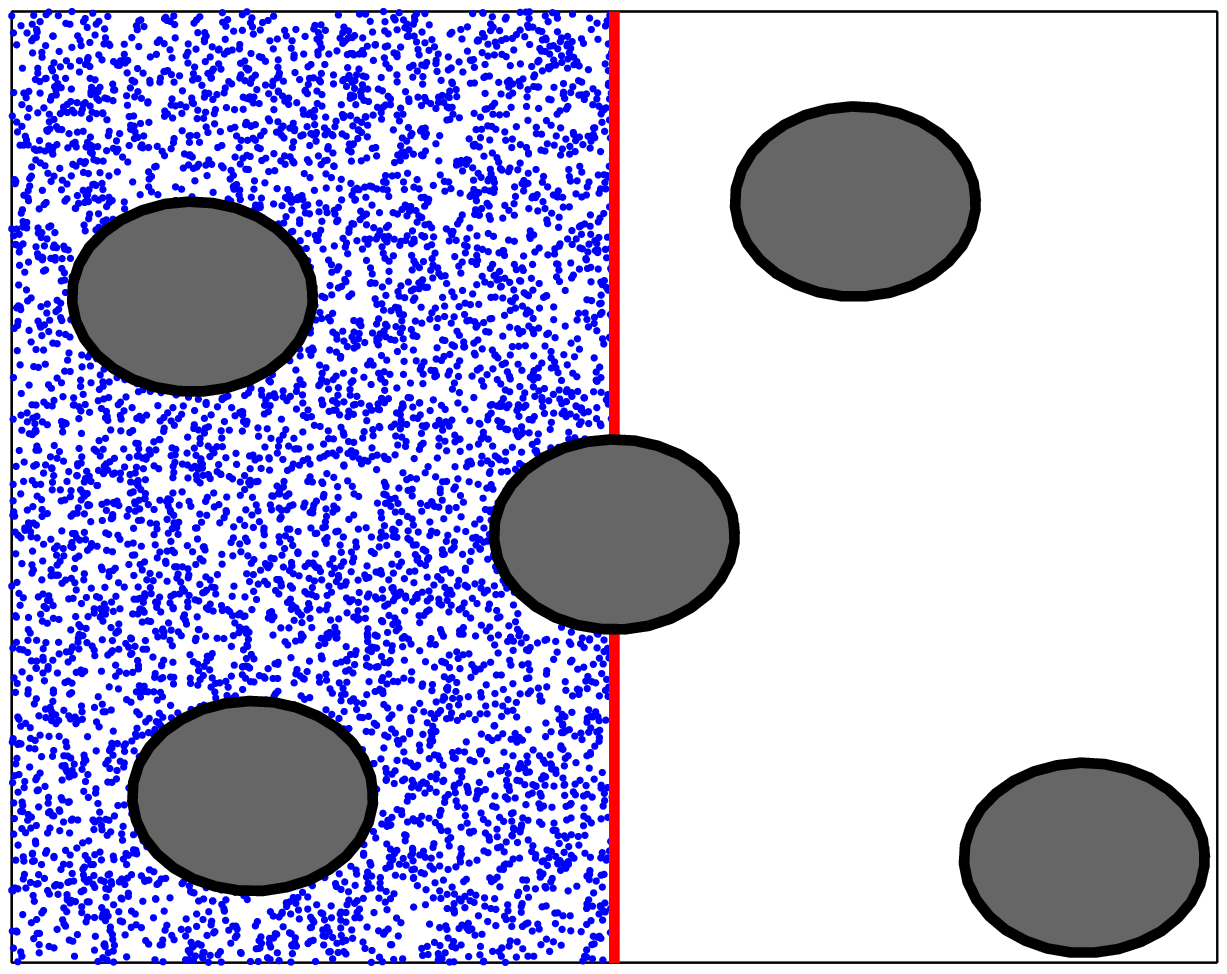,height=5cm}
}
\vskip -53 mm 
\noindent (a) \hskip 57 mm (d) \hskip  10mm 
$\Omega_D$ \hskip 17mm $\Omega_C$\par
\vskip 0.1mm
\noindent \hskip  18mm $\Omega_D$ 
\hskip 8mm $I$ \hskip 9mm $\Omega_C$
\vskip 10mm
\noindent (b)
\vskip 10mm
\noindent (c)
\vskip 2mm
\hskip 95.5mm $I$ \par
\vskip 3 mm
\caption{(a)--(c) {\it Schematic of one-dimensional multiscale
set up $(\ref{geom1})$.} (d) {\it Schematic of multiscale set up 
$(\ref{geom1})$ (in two dimensions)}.}
\label{figure2}
\end{figure}

The schematic in Figure \ref{figure2}(d) is presented in two spatial
dimensions to better visualize the problem geometry. MD models [B] 
and [C] are formulated in a three-dimensional physical space. In the
three-dimensional version of Figure \ref{figure2}(d),
the cloud of blue particles would cover grey ball. To get some insights
into this multiscale problem, we start with the one-dimensional
MD model [A].

\section{From one-dimensional MD model [A] to Brownian dynamics}
\label{secfromMDtoBD1D}

In the case of one-dimensional MD model [A], the situation is schematically 
shown in Figures \ref{figure2}(a)-(c) where we only consider one large 
(heavy) particle, i.e. $N=1$. The large particle can either be
in $\Omega_C$ (see Figure \ref{figure2}(a)), or in $\Omega_D$ 
(see Figure \ref{figure2}(c)) or crossing the boundary as it is shown 
in Figure \ref{figure2}(b). Our geometry is given by
(\ref{geom1}) where
$$
\Omega = (-L,L),
\qquad
\Omega_D = (-L,0),
\qquad
\Omega_C = (0,L),
\qquad
\mbox{and}
\qquad
I = \{0\}.
$$
The large particle covers the interval $(X(t)-R,X(t)+R)$. Let us consider
that the large particle intersects the interface $I$
as it is shown in Figure \ref{figure2}(b). Then
$I \subset (X(t)-R,X(t)+R)$ which is equivalent to $X(t) \in (-R,R)$. 
The heat bath particles are simulated in $\Omega_D$ 
using the MD model [A]. Let us choose $\Delta t$
so small that the probability of two collisions happening 
in the time interval $(t,t + \Delta t)$ is negligible.
Since we do not explicitly simulate heat bath particles in 
$\Omega_C$, we will consider an additional correction of the 
velocity of the heavy particle in the form
\begin{equation}
V(t + \Delta t)
=
\widetilde{V}(t + \Delta t)
+
\alpha(V(t)) \, \Delta t
+
\beta(V(t)) \, \sqrt{\Delta t} \, \xi,
\label{halfSDE}
\end{equation}
where $\widetilde{V}(t + \Delta t)$ is the post-collision velocity of 
the heavy particle at time $t + \Delta t$ which only takes into account 
collisions with the heat bath particles from the left. It is either equal 
to $V(t)$ or computed by (\ref{col1Dmu}) if a collision with a heat bath 
particle occurred in $\Omega_D$. Equation (\ref{halfSDE}) is adding both 
drift term $\alpha(V(t)) \, \Delta t$ and noise term 
$\beta(V(t)) \, \sqrt{\Delta t} \, \xi$ where
$\xi$ is a normally distributed random number with zero mean
and unit variance. The drift and noise terms implicitly take into 
account collisions at the right boundary $(X(t)+R)$ of the heavy 
particle. Passing $\Delta t \to 0$, we
observe that the contributions of the collisions at the right 
boundary are given by the It\={o} stochastic differential equation
\begin{equation}
\mbox{d}V
=
\alpha(V) \, \mbox{d}t
+
\beta(V) \, \mbox{d}W.
\label{halfSDE2}
\end{equation}
If we explicitly modelled heat bath particles in $\Omega_C$, then they 
would be distributed according to the Poisson distribution with density 
$\lambda_\mu$ in the interval $(X(t)+R,\infty)$. Their initial velocities 
would be given according to $(\ref{distr1Dv})$. Thus, using Lemma 
\ref{lemmaboundary} and (\ref{col1Dmu}), we can estimate
the drift coefficent of the stochastic differential
equation (\ref{halfSDE2}) to get
$$
\alpha(V)
=
\frac{1}{\Delta t}
\int_{X(t)+R}^\infty
\int_{-\infty}^{\infty}
\frac{2(v-V) }{\mu + 1} 
\,
H \left( \frac{X(t) + R - x}{\Delta t} + V- v \right) \, 
\lambda_\mu 
f_\mu(v) 
\, \mbox{d} v
\, \mbox{d} x,
$$
where $H(\cdot)$ is the Heaviside step function. Using (\ref{distr1Dv}),
we obtain
\begin{eqnarray}
\alpha(V)
&=&
\frac{2 \lambda_\mu}{\Delta t (\mu + 1)\sigma_\mu \sqrt{2 \pi}}
\int_{0}^{\infty}
\int_{-\infty}^{V - x/ \Delta t}
(v-V) \, \exp \left( - \frac{v^2}{2 \sigma_\mu^2} \right)
\, \mbox{d} v
\, \mbox{d} x
\nonumber
\\
&=&
-
\frac{\lambda_\mu}{\mu + 1}
\left(
\big(
\sigma_\mu^2 + V^2
\big) 
\,
\mbox{erfc} 
\left[ 
- \frac{V}{\sigma_\mu \sqrt{2}}
\right]
+
\frac{V \,\sigma_\mu \sqrt{2}}{\sqrt{\pi}}
\exp
\left[
-\frac{V^2}{2 \, \sigma^2_\mu}
\right]
\right),\qquad
\label{alphaprec}
\end{eqnarray}
where $\lambda_\mu$ and $\sigma_\mu$ are given by
(\ref{distr1Dx}) and (\ref{distr1Dv}). In the limit
$\mu \to \infty$, we have $V/\sqrt{\mu+1} \to 0$. Thus we
use the Taylor expansion in (\ref{alphaprec}) to get
\begin{equation}
\alpha(V)
\approx
-
\frac{\gamma \sqrt{\pi (\mu+1) D \gamma}}{4 \sqrt{2}}
-
\frac{\gamma}{2} \, V
-
\frac{\sqrt{\pi \gamma}}{4 \sqrt{2 D (\mu+1)}}
\, V^2. 
\label{alphatrunc}
\end{equation}
The noise term in (\ref{halfSDE2}) can be computed by
$$
\beta^2(V)
=
\frac{1}{\Delta t}
\int_{X(t)+R}^\infty
\int_{-\infty}^{\infty}
\frac{4(v-V)^2}{(\mu + 1)^2} 
\,
H \left( \frac{X(t) + R - x}{\Delta t} + V- v \right) \, 
\lambda_\mu 
f_\mu(v) 
\, \mbox{d} v
\, \mbox{d} x.
$$
Using (\ref{distr1Dv}), we obtain
$$
\beta^2(V)
=
\frac{4 \lambda_\mu}{\Delta t (\mu + 1)^2 \sigma_\mu \sqrt{2 \pi}}
\int_{0}^{\infty}
\int_{-\infty}^{V - x/ \Delta t}
(v-V)^2 \, \exp \left( - \frac{v^2}{2 \sigma_\mu^2} \right)
\, \mbox{d} v
\, \mbox{d} x
\qquad
\qquad
\quad
$$
$$
=
\frac{2 \lambda_\mu}{(\mu + 1)^2}
\left(
V (3 \sigma^2_\mu + V^2)
\, \mbox{erfc} 
\left[ 
- \frac{V}{\sigma_\mu \sqrt{2}}
\right]
+
\frac{2 (2 \sigma^2_\mu + V^2) \sigma_\mu}{\sqrt{2 \pi}}
\exp \left[ - \frac{V^2}{2 \, \sigma_\mu^2} \right]
\right).
$$
Using (\ref{distr1Dx}), (\ref{distr1Dv}) and the Taylor expansion, we obtain
\begin{equation}
\beta(V)
\approx
\sqrt{
\gamma^2 D
+
\frac{3 \gamma \sqrt{\pi D \gamma}}{2 \sqrt{2(\mu+1)}}
\, V 
+
\frac{3 \gamma}{2 (\mu + 1)}
\, V^2}.
\label{betatrunc}
\end{equation}
Equations (\ref{alphatrunc}) and (\ref{betatrunc}) are used in the
multiscale algorithm in Table \ref{tablealgH1}.
\begin{table}
\framebox{%
\hsize=0.97\hsize
\vbox{
\leftskip 10mm
\parindent -10mm
[M1] \hskip 1.2mm
Compute ``free-flight positions'' of heat bath particles 
and the heavy particle at time $t+\Delta t$ using the step [A1]. 

\smallskip

[M2] \hskip 1.2mm
Compute post-collision velocities by (\ref{col1Dmu})
for every pair of particles which collided using the step [A2].

\smallskip

[M3] \hskip 1.2mm
Terminate trajectories of heat bath particles which 
left the subdomain $\Omega_D = (-L,0)$. Update $n$ accordingly. 

\smallskip

[M4] \hskip 1.2mm
Implement the influx of heat bath particles through the boundary $x=-L$
using the step [A4].

\smallskip

[M5] \hskip 1.2mm
If $X(t) \not \in (-R,R)$, then generate a random number $r_2$ uniformly 
distributed in $(0,1)$. If $r_2<\gamma (\mu + 1) \Delta t/8$, then
increase $n$ by 1, and introduce a new heat bath particle at position
$x^n(t+\Delta t)$ with velocity $v^n(t+\Delta t)$ which are 
sampled according to probability distributions 
(\ref{samplingnewXrightvnule}) and (\ref{samplingnewVrightvnule}).

\smallskip

[M6] \hskip 1.2mm
If $X(t) \in (-R,R)$, then update the heavy particle velocity using
(\ref{halfSDE}).

\smallskip

[M7] \hskip 1.2mm
If $X(t) \in [R,L)$, then update the velocity of the heavy particle 
using (\ref{SDEvel}).

\smallskip

[M8] \hskip 1.2mm
Continue with the step [M1] using time $t=t+\Delta t$.

\par \vskip 0.8mm}
}\vskip 1mm
\caption{One iteration of the computer implementation of the multiscale
algorithm which is based on the MD model [A].}
\label{tablealgH1}
\end{table} 
The first two steps [M1] and [M2] are the same as [A1] and [A2].
Since heat bath particles are only simulated in the subdomain
$\Omega_D = (-L,0)$, we remove all particles which 
left $\Omega_D$ during the time interval $(t, t+\Delta t)$ in the step [M3].
The step [M4] is the same as [A4] which introduces  
heat bath particles which have entered $\Omega_D$ through its
left boundary $x=-L$ during the time interval $(t, t+\Delta t)$.
The boundary at $x=0$ is treated in the step [M5] if the heavy particle
does not intersect with this boundary. We assume that $\Delta t$ 
is chosen so small that (\ref{averpartdeltat}) is much smaller than 1.
Then (\ref{averpartdeltat}) can be interpretted as a probability of 
introducing one particle from the left (resp. right) during one timestep. 
A particle introduced close to the right boundary of $\Omega_D$ in the step 
[M5] will have its position sampled according 
to the probability distribution
\begin{equation}
C_1 \, \mathrm{erfc} 
\left(\frac{-x}{\Delta t \, \sigma_\mu \sqrt{2}} \right),
\qquad
\mbox{for}
\;\;
x \in (-\infty,0),
\label{samplingnewXrightvnule}
\end{equation}
where $C_1$ is a normalization constant. The probability 
distribution (\ref{samplingnewXrightvnule}) can be justified 
using the same argument as Lemma \ref{lemmaboundary}
and equation (\ref{samplingnewXright}). 
To sample from the probability distribution (\ref{samplingnewXrightvnule}), 
we again use the acceptance-rejection algorithm in 
Table \ref{tableerfcsampling} with parameters $a_1$ and
$a_2$ given by (\ref{valuesa1a2}). In the step 
[M5], we also sample the velocity 
$v \in {\mathbb R}$ of the new particle using the truncated 
Gaussian distribution
\begin{equation}
C_2 \, H(x - v \Delta t) \, f_\mu(v),
\label{samplingnewVrightvnule}
\end{equation}
where $C_2$ is a normalization constant.
To sample random numbers according to the truncated normal 
distributions in the steps [M4] and [M5], we again use the
acceptance-rejection algorithm which is derived as
Proposition 2.3 in \cite{Robert:1995:STN}.
If the heavy particle does intersect with the boundary $I$, then 
the step [M6] is executed. It uses (\ref{halfSDE}) to incorporate 
collisions of heat bath particles from the right.
If the particle does not intersect with $\Omega_D$, then
we simulate it in the step [M7] using the discretized 
version of (\ref{BDVeq}) given by
\begin{equation}
V(t + \Delta t)
=
- \gamma V(t) \, \Delta t
+
\gamma \, \sqrt{2 \, D \, \Delta t} \, \xi,
\label{SDEvel}
\end{equation}
In Figure \ref{figure3}, we present illustrative results 
\begin{figure}[t]
\picturesAB{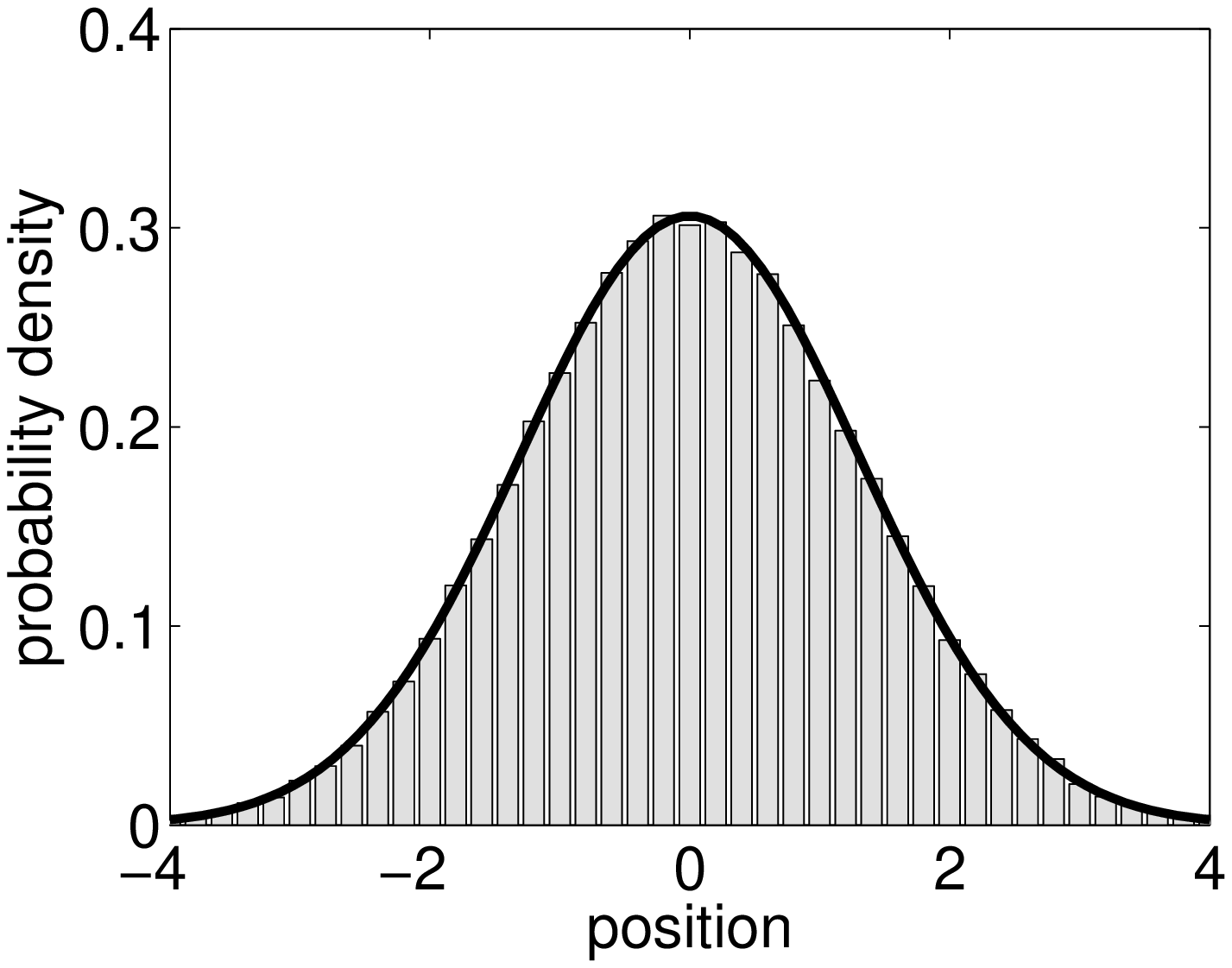}{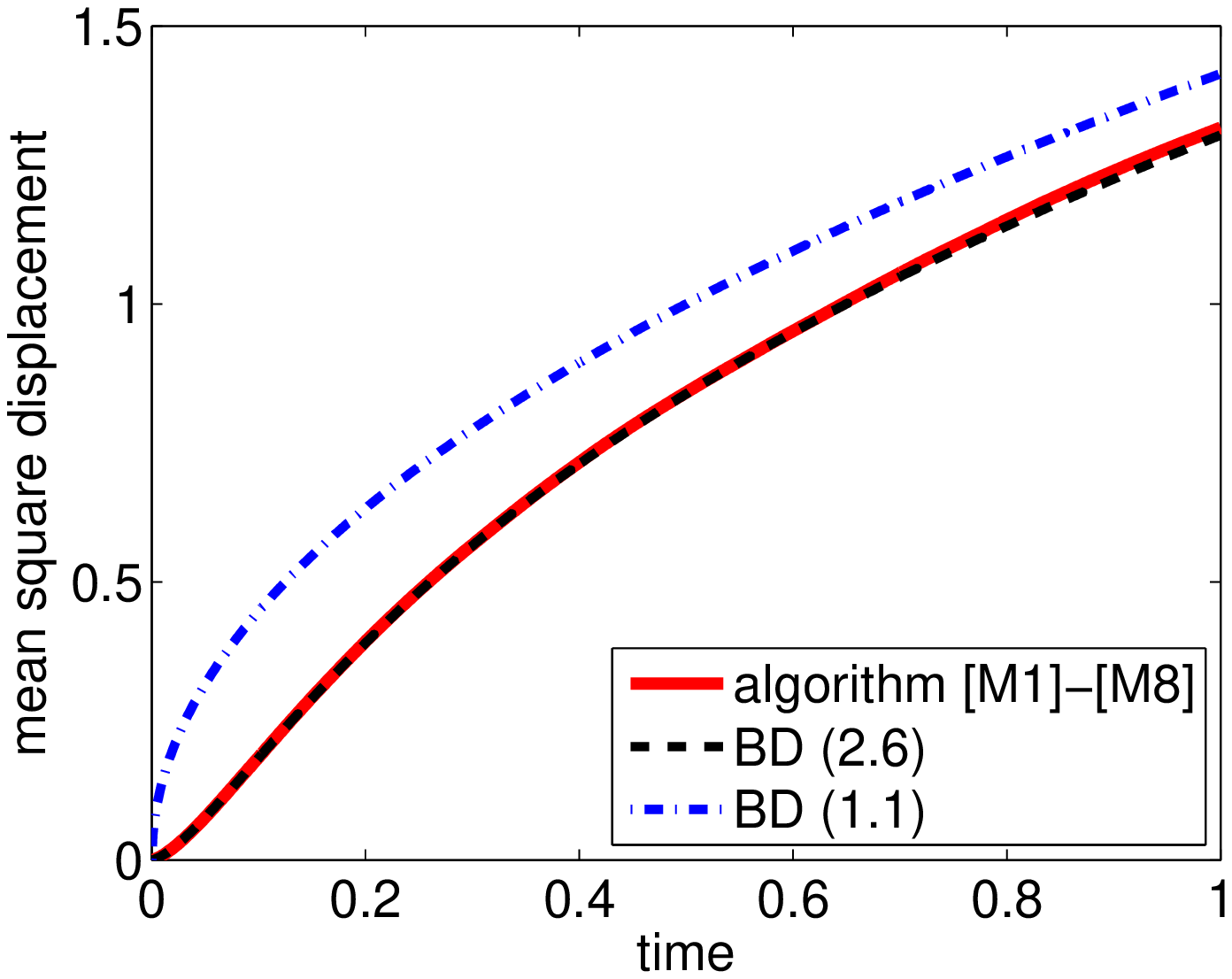}{5.3cm}{3mm}
\caption{(a) {\it Probability distribution of the heavy particle 
at time $t=1$ computed by the multiscale algorithm 
{\rm [M1]}--{\rm [M8]} (gray histogram) is compared with the
distribution $(\ref{shiftedGaussion})$ given by the BD model 
$(\ref{evollimitXV})$ (black solid line).
(b) {\it The time evolution of the mean square displacement computed 
by $10^4$ realizations of the algorithm {\rm [M1]}--{\rm [M8]}
(red solid line) is compared with equation 
$(\ref{eqmsd15})$ (black dashed line) and 
$\sqrt{2 D \, t}$ (blue dot-dashed line). \hfill\break
We use $\mu = 10^3,$ $\gamma = 10$, $D=1$, $\Delta t = 10^{-7}$,
$L=10$, $R=1$, $X(0) = 0$ and $V(0) = 0$.}
}}
\label{figure3}
\end{figure}
computed by the algorithm [M1]-[M8]. We consider one heavy particle
which starts at position $X(0) = 0$ with velocity $V(0) = 0$ as we did
in Figure \ref{figure1}. The distribution of its position at time
$t=1$, computed using $10^5$ realizations of the algorithm
[M1]-[M8], is plotted in Figure \ref{figure3}(a). It is compared
with the distribution obtained by the limiting BD model 
(\ref{evollimitXV}) which is, for $t \gg \gamma^{-1}$, 
given by \citep{Hagan:1989:MET}
\begin{equation}
\frac{1}{\sqrt{4 \pi D \left( t - t^* \right)}}
\exp \left[ - \frac{x^2}{4 D \left( t - t^* \right) } \right],
\qquad \mbox{where} \quad t^* = \frac{3}{2\gamma}.
\label{shiftedGaussion}
\end{equation}
In Figure \ref{figure3}(b), we plot the time evolution of the
mean square displacement. This figure can be directly compared with 
Figure \ref{figure1}(b), because we use the same parameter values.
The results computed by the multiscale algorithm [M1]-[M8] compare 
well with the results given by the BD model (\ref{evollimitXV}). We 
have already shown in Figure \ref{figure1}(b) that the limiting BD model
(\ref{evollimitXV}) also compares well with the MD simulations. 
In particular, the algorithm [M1]-[M8] is able to compute results with the 
MD-level precision by using coarser BD models in a part of the computational
domain.

The stochastic differential equation (\ref{halfSDE2}) was derived
for collisions from the right. Using the same argument, we can
also derive a stochastic differential equation which is approximating
the effect of collisions from the left. We obtain
\begin{equation}
\mbox{d}V
=
- \alpha(-V) \, \mbox{d}t
+
\beta(-V) \, \mbox{d}W.
\label{halfSDE3}
\end{equation}
Adding (\ref{halfSDE2}) and (\ref{halfSDE3}) and using the
independence of noise terms in (\ref{halfSDE2}) and (\ref{halfSDE3}), 
we can approximate collisions from both sides by the following SDE for
the velocity of the heavy particle:
\begin{equation}
\mbox{d}V
=
\big( \alpha(V)- \alpha(-V) \big) \, \mbox{d}t
+
\sqrt{ \beta^2(V) + \beta^2(-V) } \, \mbox{d}W.
\label{fullSDE3}
\end{equation}
Substituting (\ref{alphatrunc}) and (\ref{betatrunc}), we derive
(\ref{BDVeq}). In particular, we have verified the limiting result
in Lemma \ref{convergenceelastic}. 

\section{From three-dimensional MD models [B] and [C] to Brownian dynamics}
\label{secfromMDtoBD3D}

We use a simple multiscale geometry where domain $\Omega = {\mathbb R}^3$ 
is divided into two half spaces. Heavy molecules are simulated
in both half spaces. In $\Omega_D = (-\infty,0) \times {\mathbb R}^2$,
we use the MD model [B] or [C]. It is coupled with the BD model
given by (\ref{BDXeq})--(\ref{BDVeq}) in 
$\Omega_C = (0,\infty) \times {\mathbb R}^2.$ This set up is a
three-dimensional version of multiscale problems which are
schematically drawn in Figure \ref{figure2}. Boundary conditions 
for heat bath particles at the interface $I = \{0\} \times {\mathbb R}^2$
can be specified using Lemma \ref{lemmaboundaryBC}.   

As in Section \ref{secfromMDtoBD1D}, we need to analyse the behaviour
of a heavy molecule when it intersects with the interface $I$. 
Such molecule is subject to the collisions 
with heat bath particles on the part of its surface which lies in 
$\Omega_D$. This has to be compensated by using a suitable random 
force from $\Omega_C$, so that the overall model is equivalent 
to (\ref{BDXeq})--(\ref{BDVeq}) in the BD limit. To simplify the
presentation of the algorithm, we use the same time step 
in $\Omega_D$ and $\Omega_C$. In Section \ref{bioapplication} we present
coupling of three-dimensional MD models with the 
BD model (\ref{BDSDE}) which will make use of different time
steps in different parts of the computational domain.

The heavy particle is the ball with centre ${\mathbf X} = [X_1, X_2, X_3]$ 
with velocity ${\mathbf V} = [V_1,V_2,V_3]$ and radius $R$. It intersects 
the interface $I$ if $X_{1} (t) \in (-R,R)$. Let us consider that heat bath 
particles are simulated in $\Omega_D$ using the MD model [B] or the
MD model [C]. Let us choose $\Delta t$ so small that the probability of two 
collisions happening in the time interval $(t,t + \Delta t)$ is negligible.
Since we do not explicitly simulate the heat bath particles in 
$\Omega_C$, we will consider an additional correction of the 
velocity of the heavy particle in the form
\begin{equation}
{\mathbf V}(t + \Delta t)
=
\widetilde{{\mathbf V}}(t + \Delta t)
+
{\boldsymbol \alpha}({\mathbf X}(t),{\mathbf V}(t)) \, \Delta t
+
{\boldsymbol  \beta}({\mathbf X}(t),{\mathbf V}(t)) \, \sqrt{\Delta t} \, 
{\boldsymbol  \xi},
\label{halfSDE3D}
\end{equation}
where $\widetilde{{\mathbf V}}(t + \Delta t)$ is the post-collision 
velocity of the heavy particle at time $t + \Delta t$ which only takes 
into account collisions with the heat bath particles from $\Omega_D$. 
It is either equal to ${\mathbf V}(t)$ or computed by 
(\ref{eq11})--(\ref{eq12}) if a collision with a heat bath 
particle occurred in $\Omega_D$. Note that we dropped the subscript
$\mu$ in (\ref{eq11})--(\ref{eq12}) to simplify our notation.
Equation (\ref{halfSDE3D}) is a generalization of (\ref{halfSDE}) 
to three-dimensional simulations where 
${\boldsymbol \alpha}({\mathbf X}(t),{\mathbf V}(t)) \, \Delta t$
is the drift vector and 
${\boldsymbol  \beta}({\mathbf X}(t),{\mathbf V}(t)) \, \sqrt{\Delta t} \, 
{\boldsymbol \xi}$ is the noise term and
${\boldsymbol \xi} = [\xi_1, \xi_2, \xi_3]$
is the vector of three normally distributed random numbers with zero mean
and unit variances. Passing $\Delta t \to 0$, we observe that the 
contributions of the collisions from $\Omega_D$ are given by the 
It\={o} stochastic differential equation
\begin{equation}
\mbox{d}{\mathbf V}
=
{\boldsymbol \alpha}({\mathbf X}(t),{\mathbf V}(t)) \, \mbox{d}t
+
{\boldsymbol  \beta}({\mathbf X}(t),{\mathbf V}(t)) \, \mbox{d}{\mathbf W}.
\label{halfSDE23D}
\end{equation}
To estimate drift ${\boldsymbol \alpha}$ and diffusion coefficient
${\boldsymbol  \beta}$, we separately consider MD models [B] and [C] 
in the following two subsections.

\subsection{MD model {\rm [B]}}

\noindent
The following lemma will be useful to estimate the drift coefficient
${\boldsymbol \alpha}$.

\begin{lemma}\label{3Dlemmaypoint}
Let $\gamma>0$, $D>0$, $R>0$ and $\Delta t > 0$. Let us consider 
the MD model {\rm [B]} where the positions and velocities of heat bath 
particles are distributed according to $(\ref{lambda3Dexp})$
and $(\ref{fvel3Dexp})$. Let us consider one heavy molecule 
in such a heat bath, i.e. $N=1$, with the position
of its centre to be at ${\mathbf X}(t) = [X_1(t), X_2(t), X_3(t)]$ 
and with velocity ${\mathbf V}(t) = [V_1(t),V_2(t),V_3(t)]$.
Let ${\mathbf y} = (y_1,y_2,y_3)$ be a given point on the surface of
the heavy molecule at time $t$, i.e.
\begin{equation}
(y_1 - X_1(t))^2 + (y_2 - X_2(t))^2 + (y_3 - X_3(t))^2 = R^2.
\label{defsurfacepoint}
\end{equation}
Then the average change of the $j$-th component of the
velocity of the heavy molecule caused by collisions with heat
bath particles in the time interval $(t,t+\Delta t)$ at the
surface area $({\mathbf y},{\mathbf y} + \mbox{{\rm d}}{\mathbf y})$ 
is $\psi_j({\mathbf y}) \, \mbox{{\rm d}}{\mathbf y}$ where
\begin{eqnarray}
\psi_j({\mathbf y})
&=&
- \;
\frac{\lambda_\mu \, \sigma_\mu^2 \,
(y_j -X_j(t)) \, \Delta t }{(\mu + 1) \, R }
+
\frac{4 \, \lambda_\mu \, \sigma_\mu \,
(y_j -X_j(t)) \, \Delta t}{(\mu + 1) \, R^2 \, \sqrt{2 \, \pi}}
\; {\mathbf V}(t) \cdot ({\mathbf y}-{\mathbf X}(t))
\nonumber \\
&&
+ \;
\mathrm{O}\left( \parallel {\mathbf V} \parallel^2 \right).
\label{psijformula} 
\end{eqnarray}
\end{lemma}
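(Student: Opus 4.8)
My plan is to compute the expected velocity change of the heavy molecule by integrating the elementary collision formula over all heat bath particles that can reach a prescribed surface patch near $\mathbf{y}$ during the time step $\Delta t$. The geometric picture is that a collision at the point $\mathbf{y}$ on the sphere transfers momentum along the inward normal direction $\mathbf{n} = (\mathbf{y}-\mathbf{X}(t))/R$. From the collision law $(\ref{eq11})$, only the normal component of the relative velocity is altered, and the change in the heavy particle's velocity is $\Delta \mathbf{V} = \tfrac{2}{\mu+1}\big[(\mathbf{v}^j-\mathbf{V})\cdot\mathbf{n}\big]\mathbf{n}$ to leading order in $1/(\mu+1)$. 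First I would set up the flux integral: a heat bath particle with velocity $\mathbf{v}$ collides with the patch $(\mathbf{y},\mathbf{y}+\mathrm{d}\mathbf{y})$ during $(t,t+\Delta t)$ precisely when its relative approach velocity along $\mathbf{n}$ is inward, i.e. $(\mathbf{v}-\mathbf{V})\cdot\mathbf{n} < 0$, and it lies within the collision cylinder of height $|(\mathbf{v}-\mathbf{V})\cdot\mathbf{n}|\,\Delta t$ over that patch.

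\smallskip

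\noindent\textbf{The flux integral.} The expected contribution to the $j$-th velocity component is then
\begin{equation}
\psi_j(\mathbf{y}) = \lambda_\mu \int_{(\mathbf{v}-\mathbf{V})\cdot\mathbf{n}<0}
\frac{2}{\mu+1}\big[(\mathbf{v}-\mathbf{V})\cdot\mathbf{n}\big] n_j \;
\big|(\mathbf{v}-\mathbf{V})\cdot\mathbf{n}\big|\,\Delta t \; f_\mu(\mathbf{v})\,\mathrm{d}\mathbf{v},
\nonumber
\end{equation}
where $n_j = (y_j - X_j(t))/R$. The inward-flux condition turns the $|(\mathbf{v}-\mathbf{V})\cdot\mathbf{n}|$ factor into $-[(\mathbf{v}-\mathbf{V})\cdot\mathbf{n}]$, so the integrand carries a factor $-[(\mathbf{v}-\mathbf{V})\cdot\mathbf{n}]^2 n_j$. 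The natural reduction is to decompose $\mathbf{v}$ into its component $w = \mathbf{v}\cdot\mathbf{n}$ along $\mathbf{n}$ and its two orthogonal components; since $f_\mu$ in $(\ref{fvel3Dexp})$ is isotropic Gaussian with variance $\sigma_\mu^2$, the marginal of $w$ is one-dimensional Gaussian $N(0,\sigma_\mu^2)$ and the tangential components integrate to unity. This collapses $\psi_j$ to a one-dimensional moment integral $\int_{-\infty}^{V_n} (w-V_n)^2\,\mathrm{d}w$ against the Gaussian, where $V_n = \mathbf{V}\cdot\mathbf{n}$, exactly analogous to the $\beta^2(V)$ computation in Section~\ref{secfromMDtoBD1D}.

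\smallskip

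\noindent\textbf{Extracting the expansion.} After evaluating that Gaussian moment in closed form (it produces the same $\mathrm{erfc}$ and $\exp$ combination as in the one-dimensional case), I would Taylor-expand in $V_n$ about $V_n=0$, which is legitimate because $V_n/\sigma_\mu \to 0$ as $\mu \to \infty$ by the scaling of $\sigma_\mu$. The constant term yields the $\sigma_\mu^2$ contribution, which upon multiplying by $n_j = (y_j-X_j)/R$ gives the first term of $(\ref{psijformula})$, namely $-\lambda_\mu\sigma_\mu^2(y_j-X_j)\Delta t/[(\mu+1)R]$; the sign emerges from the inward-flux restriction. The term linear in $V_n$ gives the second contribution, and since $V_n = \mathbf{V}\cdot(\mathbf{y}-\mathbf{X})/R$, this produces the factor $\mathbf{V}(t)\cdot(\mathbf{y}-\mathbf{X}(t))$ together with the remaining $(y_j-X_j)/R^2$ dependence, matching the stated coefficient $4\lambda_\mu\sigma_\mu(y_j-X_j)\Delta t/[(\mu+1)R^2\sqrt{2\pi}]$. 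All quadratic and higher terms are collected into $\mathrm{O}(\|\mathbf{V}\|^2)$.

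\smallskip

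\noindent\textbf{Main obstacle.} The routine part is the Gaussian moment evaluation. The genuinely delicate step is the geometry: correctly identifying the collision cylinder and the inward-flux half-space $(\mathbf{v}-\mathbf{V})\cdot\mathbf{n}<0$, and verifying that the Jacobian of the velocity decomposition along and orthogonal to the spatially-varying normal $\mathbf{n}(\mathbf{y})$ introduces no extra factors. One must also confirm that, to the stated order, it suffices to treat $\mathbf{y}$ as fixed on the sphere during the step (the particle's displacement $\mathbf{V}\Delta t$ perturbs this only at order $\Delta t^2$). Keeping careful track of which $(y_j-X_j)/R$ factors come from the normal projection $n_j$ versus from $V_n$ is where bookkeeping errors would most easily creep in, so that is where I would concentrate the verification.
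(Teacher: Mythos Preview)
Your proposal is correct and follows essentially the same route as the paper: both compute the per-collision velocity change $\tfrac{2}{\mu+1}[(\mathbf{v}-\mathbf{V})\cdot\mathbf{n}]\,\mathbf{n}$ from (\ref{eq11}), set up a flux integral over heat bath particles reaching the patch, exploit the isotropy of $f_\mu$ to collapse to a one-dimensional Gaussian moment in the normal component, and Taylor-expand in $V_n=\mathbf{V}\cdot\mathbf{n}$. The only cosmetic difference is that the paper parametrizes incoming particles by their initial position and collision time $\tau\in(0,\Delta t)$ via $\mathbf{x}=\mathbf{y}+\tau\mathbf{V}+c_1\tau\,\mathbf{n}+c_2\tau\,\boldsymbol{\eta}_2+c_3\tau\,\boldsymbol{\eta}_3$, so that your flux factor $|(\mathbf{v}-\mathbf{V})\cdot\mathbf{n}|\,\Delta t$ emerges instead as the Jacobian of this change of variables times the trivial $\tau$-integral; the resulting integral $\int(c_1+V_n)^2\exp[-c_1^2/(2\sigma_\mu^2)]\,\mathrm{d}c_1$ and its expansion are identical to yours.
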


\noindent
\begin{proof} Let us consider that a heat bath particle
which was at point ${\mathbf x}$ at time $t$ collided
with the heavy molecule at time $t + \tau \in (t, t+\Delta t)$
at the surface point which had coordinate ${\mathbf y}$ at time $t$.
Then the coordinate of the surface point at the collision
time $t + \tau$ was ${\mathbf y} + \tau {\mathbf V}(t)$
and the pre-collision velocity of the heat bath molecule was 
${\mathbf v} = {\mathbf V}(t) + ({\mathbf y} - {\mathbf x})/\tau$.
Using equation (\ref{eq11}), we can write the change
of the velocity of the heavy molecule during the collision 
as
\begin{equation}
\frac{2}{\mu + 1}
\, 
\left[{\mathbf v} - {\mathbf V}(t) \right]^\perp
=
\frac{2}{\mu + 1}
\, 
\left(
\frac{({\mathbf y} - {\mathbf x})}{\tau} \cdot 
\frac{({\mathbf y}-{\mathbf X}(t))}{R} 
\right)
\frac{({\mathbf y}-{\mathbf X}(t))}{R}.
\label{paramtheta}
\end{equation}
The position ${\mathbf x}$ of the heat bath particle must
be in the half space which lies above the plane
tangent to the heavy molecule at the collision point 
${\mathbf y} + \tau V(t)$. It can be parametrized by 
$$
{\mathbf x} = {\mathbf y} + \tau \, {\mathbf V}(t) 
+ c_1 \, \tau \, \frac{({\mathbf y}-{\mathbf X}(t))}{R}
+ c_2 \, \tau \, {\boldsymbol \eta_2} + c_3 \, \tau \, {\boldsymbol \eta_3},
$$
where $c_1 > 0,$ $c_2 \in {\mathbb R},$ $c_3  \in {\mathbb R}$,
and $({\mathbf y}-{\mathbf X}(t))/R$, ${\boldsymbol \eta_2}$, 
${\boldsymbol \eta_3}$ is the orthornormal basis in ${\mathbb R}^3.$ 
Then (\ref{paramtheta}) reads as follows
\begin{equation*}
\frac{2}{\mu + 1}
\, 
\left[{\mathbf v} - {\mathbf V}(t) \right]^\perp
=
-
\frac{2}{(\mu + 1) \, R}
\, 
\left(
c_1
+
{\mathbf V}(t) 
\cdot 
\frac{({\mathbf y}-{\mathbf X}(t))}{R} 
\right)
({\mathbf y}-{\mathbf X}(t)).
\end{equation*}
Thus we have
\begin{eqnarray}
\psi_j({\mathbf y})
&=&
-
\frac{2 \, \lambda_\mu \,
(y_j -X_j(t))}{(\mu + 1) \, R}
\int_{0}^\infty
\int_{-\infty}^\infty
\int_{-\infty}^\infty
\int_{0}^{\Delta t}
\left(
c_1 + {\mathbf V}(t) \cdot 
\frac{({\mathbf y}-{\mathbf X}(t))}{R} 
\right)^2
\,
\nonumber
\\
&\times&
f_\mu \left( 
- c_1 \, \frac{{\mathbf y}-{\mathbf X}(t)}{R}
- c_2 \, {\boldsymbol \eta_2} 
- c_3 \, {\boldsymbol \eta_3}
\right) 
\, \mbox{d} \tau
\, \mbox{d} c_3
\, \mbox{d} c_2
\, \mbox{d} c_1.
\label{psijformulaaux}
\end{eqnarray}
Substituting (\ref{fvel3Dexp}) for $f_\mu$ and
integrating over $\tau$, $c_2$ and $c_3$, we have
\begin{equation*}
\psi_j({\mathbf y})
=
-
\frac{\lambda_\mu \, (y_j -X_j(t)) \, 
\Delta t \, \sqrt{2}}{(\mu + 1) \, R \, \sigma_\mu \, \sqrt{\pi}}
\int_{0}^\infty \!\!
\left(
c_1 + {\mathbf V}(t) \cdot 
\frac{({\mathbf y}-{\mathbf X}(t))}{R} 
\right)^2
\,
\exp \left[ 
- \frac{c_1^2}{2 \sigma_\mu^2} \right]
\, \mbox{d} c_1.
\end{equation*}
Integrating over $c_1$, we deduce (\ref{psijformula}). 
\end{proof}

\noindent
Using Lemma \ref{3Dlemmaypoint}, we can compute the drift
coefficient ${\boldsymbol \alpha}({\mathbf X}(t),{\mathbf V}(t))$ 
in equation (\ref{halfSDE23D}) as follows
\begin{equation}
\alpha_j({\mathbf X}(t),{\mathbf V}(t))
=
\frac{1}{\Delta t}
\int_{S({\mathbf X}(t))}
\psi_j({\mathbf y})
\mbox{d} {\mathbf y}
\label{drift3Dcoefficient}
\end{equation}
where $S({\mathbf X}(t))$ is the part of the surface of the heavy molecule
which intersects the BD subdomain $\Omega_C$, i.e.
$$
S({\mathbf X}(t)) =
\left\{ {\mathbf y} \in \Omega_C \,\big|\, 
{\mathbf y} \; \mbox{satisfies} \; (\ref{defsurfacepoint})
\right\}.
$$
Substituting (\ref{psijformula}) into (\ref{drift3Dcoefficient}), we have
\begin{eqnarray*}
\alpha_j({\mathbf X}(t),{\mathbf V}(t))
&=&
-
\frac{\lambda_\mu \, \sigma_\mu^2}{(\mu + 1) \, R }
\int_{S({\mathbf X}(t))}
(y_j -X_j(t))
\mbox{d} {\mathbf y}
\\
&-&
\frac{4 \, \lambda_\mu \, \sigma_\mu \, V_j(t)}{(\mu + 1) 
\, R^2 \, \sqrt{2 \, \pi}}
\int_{S({\mathbf X}(t))}
(y_j -X_j(t))^2
\mbox{d} {\mathbf y}.
\end{eqnarray*}
Using (\ref{lambda3Dexp}) and (\ref{fvel3Dexp}) and evaluating
the surface integrals, we obtain
\begin{equation}
\alpha_1({\mathbf X},{\mathbf V})
=
-
\frac{3 \, \gamma \, 
\sqrt{\pi \, (\mu+1) \, D \, \gamma}}{8 \, \sqrt{2}}
\left(1 - \frac{X_1^2}{R^2} \right)
-
\frac{\gamma \, V_j}{2}
\, 
\left
(1 + \frac{X_1^3}{R^3}
\right)
\label{alpha1exp}
\end{equation}
and
\begin{equation}
\alpha_j({\mathbf X},{\mathbf V})
=
-
\frac{\gamma \, V_j}{4}
\, 
\left(2 + 3 \frac{X_1}{R} - \frac{X_1^3}{R^3} \right), 
\qquad \mbox{for} \; j = 2, \; 3,
\label{equationalpha2alpha3}
\end{equation}
where we dropped the dependence on time $t$ to shorten the 
resulting formulae.
The noise matrix ${\boldsymbol  \beta}({\mathbf X}(t),{\mathbf V}(t))$
will be estimated using ${\boldsymbol  \beta}({\mathbf X}(t),{\mathbf 0})$, 
i.e. we will only use the first term in the Taylor expansion 
in ${\mathbf V}$. Using similar arguments as in the proof 
of (\ref{psijformulaaux}) and (\ref{drift3Dcoefficient}), we have
\begin{eqnarray}
\beta^2_{i,i}({\mathbf X}(t),{\mathbf 0})
&=&
-
\frac{4 \, \lambda_\mu}{(\mu + 1)^2 \, R^2}
\int_{S({\mathbf X}(t))}
\int_{0}^\infty
\int_{-\infty}^\infty
\int_{-\infty}^\infty
c_1^3 \, (y_i -X_i(t))^2
\nonumber 
\\
&\times&
f_\mu \left( 
- c_1 \, \frac{{\mathbf y}-{\mathbf X}(t)}{R}
- c_2 \, {\boldsymbol \eta_2} 
- c_3 \, {\boldsymbol \eta_3}
\right) 
\, \mbox{d} c_3
\, \mbox{d} c_2
\, \mbox{d} c_1
\mbox{d} {\mathbf y},
\qquad
\label{beta23Daux}
\end{eqnarray}
for $i=1,2,3$. Substituting (\ref{fvel3Dexp}) for $f_\mu$,
(\ref{lambda3Dexp}) for $\lambda_\mu$ and
using ${\boldsymbol  \beta}({\mathbf X},{\mathbf V}) 
= {\boldsymbol  \beta}({\mathbf X},{\mathbf 0})$, 
we obtain
\begin{eqnarray}
\beta_{1,1}({\mathbf X},{\mathbf V}) &=& 
\gamma \, \sqrt{D} \, \sqrt{ 1 + \frac{X_1^3}{R^3} } \, , 
\nonumber
\\
\beta_{j,j}({\mathbf X},{\mathbf V}) &=& 
\gamma \, \sqrt{D} \,
\sqrt{ 1 + \frac{3 X_1}{2 R} - \frac{X_1^3}{2 R^3} } \, ,
\qquad \mbox{for} \; j = 2, \; 3 \, ,
\label{noiseterm3D}
\\
\beta_{i,j}({\mathbf X},{\mathbf V}) &=& 0 \, , 
\qquad\qquad\qquad\qquad\qquad\qquad\mbox{for} \; i \ne j ,
\nonumber
\end{eqnarray}
where the last equation can be verified using the same argument
as equation (\ref{beta23Daux}). Notice that by
substituting $X_1 = R$ into (\ref{alpha1exp}), 
(\ref{equationalpha2alpha3}) and (\ref{noiseterm3D}) 
we verify the limiting result in Lemma \ref{3Dlemmaexp}.

\subsection{MD model {\rm [C]}}

\noindent
Equations (\ref{psijformulaaux}), (\ref{drift3Dcoefficient}) and
(\ref{beta23Daux}) which are derived in the previous section
are applicable to both MD models [B] and [C]. To estimate 
the drift coefficient ${\boldsymbol \alpha}({\mathbf X},{\mathbf V})$ 
for the MD model [C], we substitute (\ref{lambda3Dfixedspeed}) 
for $\lambda_\mu$ and (\ref{fvel3Dfixedspeed}) for $f_\mu$ 
in (\ref{psijformulaaux}) and (\ref{drift3Dcoefficient}). We obtain
\begin{equation}
\alpha_1({\mathbf X},{\mathbf V})
=
-
\frac{\gamma \, 
\sqrt{(\mu+1) \, D \, \gamma}}{2}
\left(1 - \frac{X_1^2}{R^2} \right)
-
\frac{\gamma \, V_j}{2}
\, 
\left
(1 + \frac{X_1^3}{R^3}
\right)
\label{alpha1fixedspeed}
\end{equation}
and $\alpha_2({\mathbf X},{\mathbf V})$ and 
$\alpha_3({\mathbf X},{\mathbf V})$ are again given by 
(\ref{equationalpha2alpha3}). Substituting (\ref{lambda3Dfixedspeed}) 
and (\ref{fvel3Dfixedspeed})
in (\ref{beta23Daux}) and integrating, we obtain 
that noise matrix ${\boldsymbol \beta}({\mathbf X},{\mathbf V})$
satisfies (\ref{noiseterm3D}). We again notice that the
special choice $X_1 = R$ in (\ref{alpha1fixedspeed}) can be used
to verify the limiting result in Lemma \ref{3Dlemmafixedspeed}.

\subsection{Illustrative numerical results}

In the previous two subsections we have observed that the only difference
between MD models [B] and [C] is a different formula for the 
coefficient $\alpha_1({\mathbf X},{\mathbf V})$ in 
(\ref{halfSDE23D}), given by (\ref{alpha1exp}) and
(\ref{alpha1fixedspeed}), respectively. The remaining terms 
in (\ref{halfSDE23D}) are the same, given by (\ref{equationalpha2alpha3}) 
and (\ref{noiseterm3D}). In this section, we present an illustrative
computation with the MD model [C], but the same results can also 
be obtained with the MD model [B] (results not shown). An
illustrative computation with the MD model [B] is presented 
later in Section \ref{bioapplication}.

\newcommand{\picturesABal}[5]{
\centerline{
\hskip #4
\raise #3 \hbox{\raise 0.9mm \hbox{(a)}}
\hskip -8mm
\epsfig{file=#1,height=#3}
\raise #3 \hbox{\raise 0.9mm \hbox{(b)}}
\hskip -8mm
\raise 1.5mm \hbox{\epsfig{file=#2,height=#5}}
}}

We consider a three-dimensional generalization of the illustrative 
problem from Figure \ref{figure3} from Section \ref{secfromMDtoBD1D}. 
One heavy particle which starts at position ${\mathbf X}(0) 
= [0,0,0]$ with velocity 
${\mathbf V}(0)=[0,0,0]$ is simulated using a three-dimensional generalization
of the algorithm [M1]-[M8]. We use the MD model [C] in 
$\Omega_D = (-\infty,0) \times {\mathbb R}^2$ and
the BD model (\ref{BDXeq})--(\ref{BDVeq}) in 
$\Omega_C = (0,\infty) \times {\mathbb R}^2.$
In the step [M6], we replace 
(\ref{halfSDE}) with its three-dimensional analogue
(\ref{halfSDE3D}) where drift ${\boldsymbol \alpha}$ and 
diffusion coefficient ${\boldsymbol  \beta}$
are given by (\ref{alpha1fixedspeed}), 
(\ref{equationalpha2alpha3}) and (\ref{noiseterm3D}).
The distribution of $X_1$ positions of the heavy particle at time
$t=1$, computed using $10^5$ realizations of the 
multiscale algorithm, is plotted in Figure \ref{figure4}(a).
The limiting BD result is again given by  
\begin{figure}[t]
\picturesABal{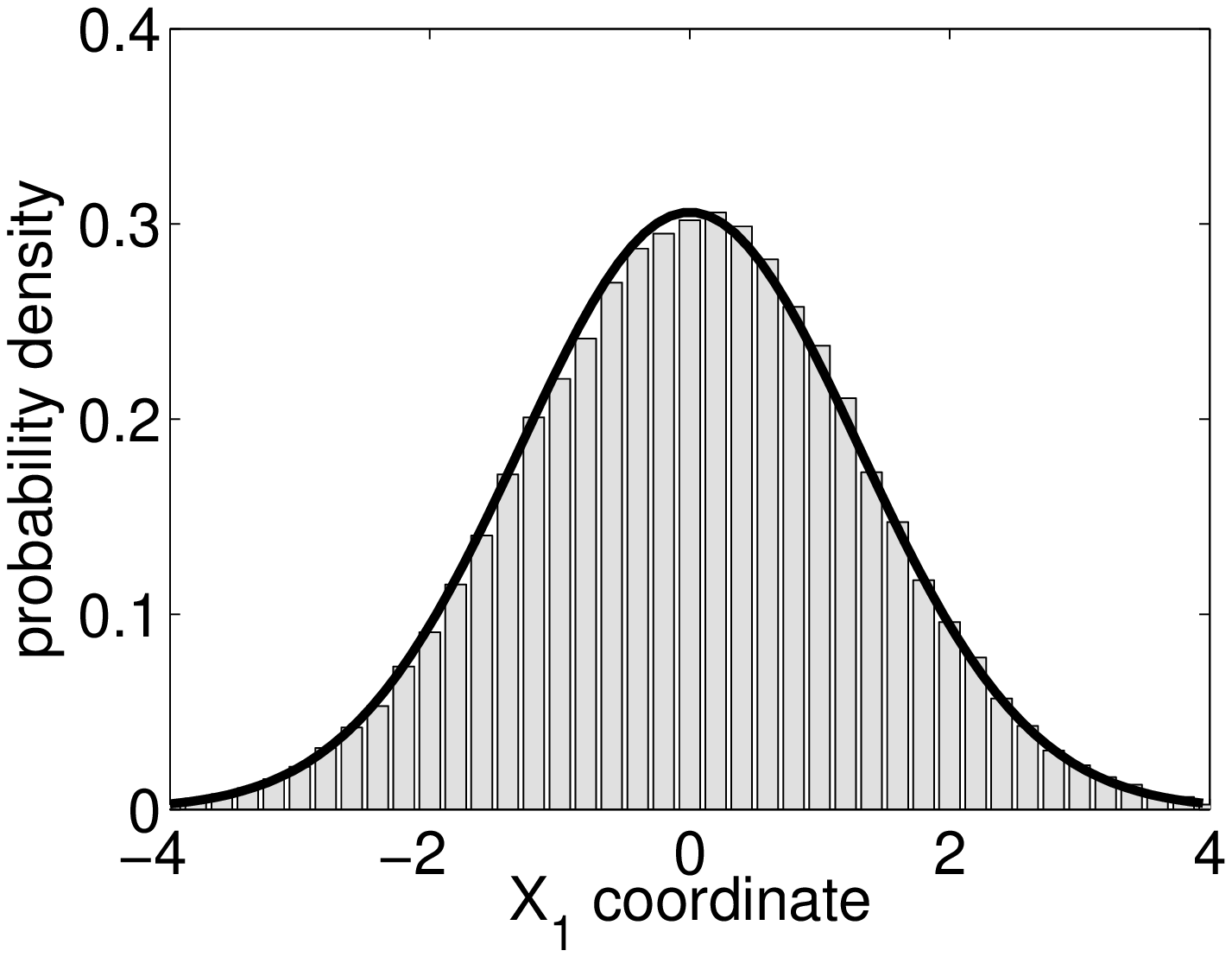}{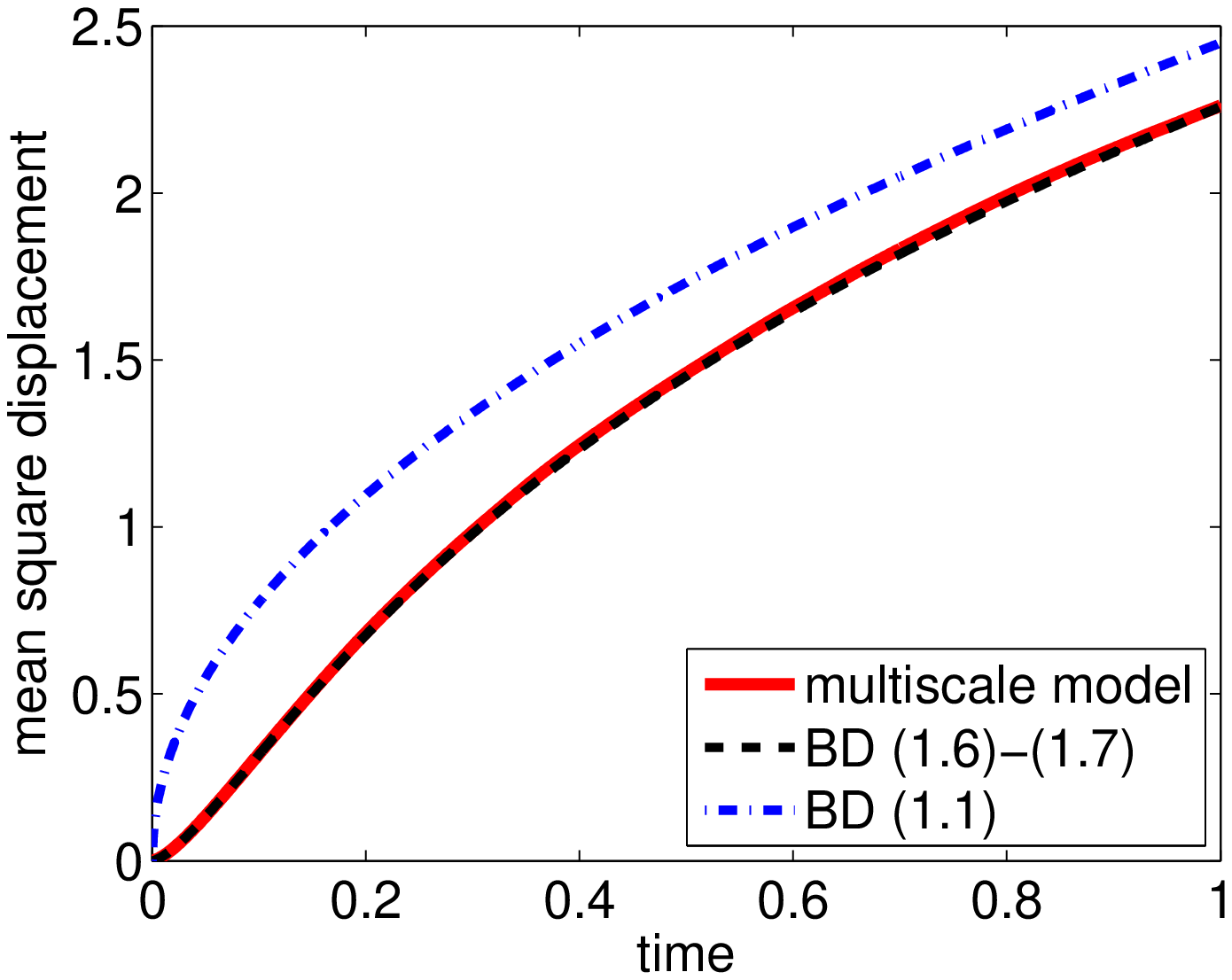}{5.3cm}{3mm}{5.1cm}
\caption{(a) {\it Probability distribution of the first coordinate,
$X_1$, of the heavy particle 
at time $t=1$ computed by the three-dimensional multiscale 
algorithm (gray histogram) is compared with the
distribution $(\ref{shiftedGaussion})$ given by the BD model 
(black solid line).}
(b) {\it The time evolution of the mean square displacement computed by
$10^5$ realizations of the three-dimensional multiscale 
algorithm (red solid line) is compared with the limiting
BD model $(\ref{BDXeq})$--$(\ref{BDVeq})$ (black dashed line) and 
$\sqrt{6 D \, t}$ (blue dot-dashed line). \hfill\break
We use $\mu = 10^3,$ $\gamma = 10$, $D=1$, $\Delta t = 10^{-6}$,
$L=5$, $R=1$, ${\mathbf X}(0) =  [0,0,0]$ and 
${\mathbf V}(0) = [0,0,0]$.
}}
\label{figure4}
\end{figure}
(\ref{shiftedGaussion}). In Figure \ref{figure4}(b), we plot the 
time evolution of the mean square displacement.  The mean 
square displacement corresponding to the limiting
BD model (\ref{BDXeq})--(\ref{BDVeq}) is given by
$(\ref{eqmsd15})$ multiplied by $\sqrt{3}$ because we have
three spatial dimensions (black dashed line). As expected, 
the models compare well. The mean square displacement 
obtained for the BD model (\ref{BDSDE}) is plotted 
as the blue dot-dashed line. 

\section{Application to protein binding to receptors}
\label{bioapplication}

\begin{figure}[t]
\centerline{
\epsfig{file=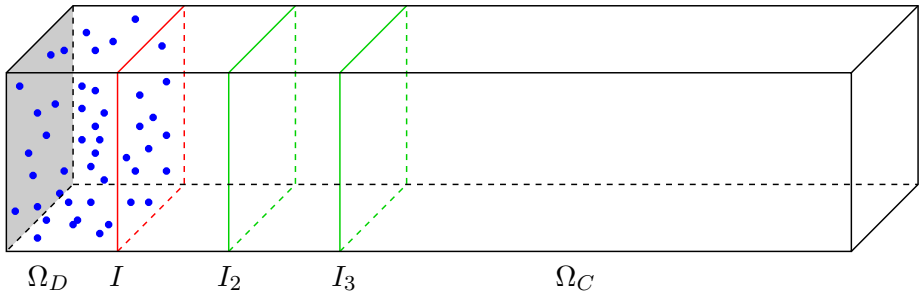,height=3.3cm}
}
\hskip 10mm $\Omega_D$ \hskip 4mm $I$  \hskip 11mm $I_2$ 
\hskip 10.5mm $I_3$ \hskip 25mm $\Omega_C$
\caption{{\it Schematic of the computational domain used in
the protein binding example.}}
\label{figureprotbind}
\end{figure}

In this section, we apply our results to a simplified model of protein
binding to receptors on the cell membrane. We consider simple
geometry which is schematically shown in Figure \ref{figureprotbind}.
Our computational domain is a part of the intracellular space
next to the cell membrane given as the cuboid
$\Omega = [0,L_1] \times [0,L_2] \times [0,L_2]$ 
where $L_1>0$ and $L_2>0$. The cell membrane is modelled
by one side of the cuboid, namely 
$$
\partial \Omega_M = \{0\} \times [0,L_2] \times [0,L_2],
$$
which is shaded gray in Figure \ref{figureprotbind}.
Our goal is to model the binding of diffusing proteins to receptors
on the cell membrane with an MD-level of detail. Therefore we
define $\Omega_D$ as a part of the intracellular space which is
close to the cell membrane $\partial \Omega_M$, i.e.
$$
\Omega_D = [0,h] \times [0,L_2] \times [0,L_2],
\qquad
\mbox{and}
\qquad
\Omega_C = [h,L_1] \times [0,L_2] \times [0,L_2],
$$
where $h>0$ and the interface $I$ is at $x_1 = h$. Diffusing proteins
are modelled as spheres of radius $R$. We consider that 
a protein which hits the boundary $\partial \Omega_M$ will bind to a receptor
with probability $P$, and otherwise it is reflected. This type
of a reactive boundary condition is common for BD simulations 
\citep{Erban:2007:RBC}. In the case of MD, more detailed models
of protein binding could be introduced in $\Omega_D$ 
\citep{Dror:2011:PMD,Vilaseca:2013:UMC}. 
However, the main goal of this section is to show how an MD
model in $\Omega_D$ can be coupled with BD simulators 
which have been developed for simulations of intracellular processes.
Therefore we keep the MD model in $\Omega_D$ as simple as possible.

If we used BD model (\ref{BDXeq})--(\ref{BDVeq}) in $\Omega_C$,
then the situation would be more or less the same as in 
Section \ref{secfromMDtoBD3D}. However, modern BD simulators
of intracellular processes work with the high-friction limit
(\ref{BDSDE}) rather than (\ref{BDXeq})--(\ref{BDVeq}).
For example, the software package Smoldyn discretizes
(\ref{BDSDE}) with a fixed time step and uses (\ref{eq4}) to 
update positions of diffusing proteins. In particular, it uses 
larger values of time step than we used in Section \ref{secfromMDtoBD3D}. 
Then the problem can be formulated as follows: we would like to use 
the MD model with time step $\Delta t$ in $\Omega_D$ and couple it
with the BD model (\ref{eq4}) with larger time step $\overline{\Delta t}$,
namely
\begin{equation}
X_i(t+\overline{\Delta t}) = X_i(t) + \sqrt{2D \overline{\Delta t}} \, \xi_{i},
\qquad
i=1,2,3,
\label{eq4overline}
\end{equation}
if the diffusing molecule is far away from $\Omega_D$. We 
couple these models using the intermediate BD model 
(\ref{BDXeq})--(\ref{BDVeq}). We introduce two 
additional interfaces 
$$
I_2 = \{h_2\} \times [0,L_2] \times [0,L_2],
\quad
\mbox{and}
\quad
I_3 = \{h_3\} \times [0,L_2] \times [0,L_2],
$$
where $h < h_2 < h_3 < L_1$, as shown in Figure \ref{figureprotbind}. 
We denote
$$
\Omega_{C1}
=
[h,h_3] \times [0,L_2] \times [0,L_2],
\quad
\mbox{and}
\quad
\Omega_{C2} 
= 
[h_2,L_1] \times [0,L_2] \times [0,L_2],
$$
i.e. $\Omega_{C1}$ and $\Omega_{C2}$ are two overlapping subdomains
of $\Omega_C$. We simulate the time evolution of the position
$\bold{X}(t)$ of one protein molecule. If $\bold{X}(t) \in \Omega_{C2}$,
then the BD model (\ref{eq4overline}) will be used in $\Omega_{C2}$ until
the molecule leaves $\Omega_{C2}$. Then we switch to the shorter
time step $\Delta t$ and use the BD model (\ref{BDXeq})--(\ref{BDVeq})
in $\Omega_{C1}$. The protein molecule can leave 
$\Omega_{C1}$ in two possible ways:

\medskip

{

\leftskip 15mm

\parindent -5.4mm

(i) The protein molecule crosses the interface $I_3.$ \\
Then we revert to the BD model (\ref{eq4overline}) which 
is used in $\Omega_{C2}$.

\medskip

\parindent -6.6mm

(ii) The protein molecule crosses the interface $I.$ \\
Then we use the method from Section \ref{secfromMDtoBD3D}
for coupling the MD model in $\Omega_D$ with 
the BD model (\ref{BDXeq})--(\ref{BDVeq}) in $\Omega_{C1}$.

\leftskip 0mm

}

\medskip

\noindent
Since the subdomains $\Omega_{C1}$ and $\Omega_{C2}$ overlap,
we can use the limiting result (\ref{shiftedGaussion}) which 
implies that, for times $t \ge \gamma^{-1}$, the BD model 
(\ref{BDXeq})--(\ref{BDVeq}) is given by the BD model 
(\ref{eq4overline}) shifted by time $t^{*} = 3/(2\gamma)$.
In particular, we will also add or subtract $t^{*}$ from the
time variable whenever we switch between BD models.

In Figure \ref{figure6}, we present illustrative results computed
by averaging over $10^5$ realizations. Initial positions
of the protein molecule are uniformly distributed along the $x_1$
axis. The histogram of positions (along the $x_1$-axis) at time 
$t=1$ is plotted in Figure \ref{figure6}(a). 
\begin{figure}[t]
\picturesABal{./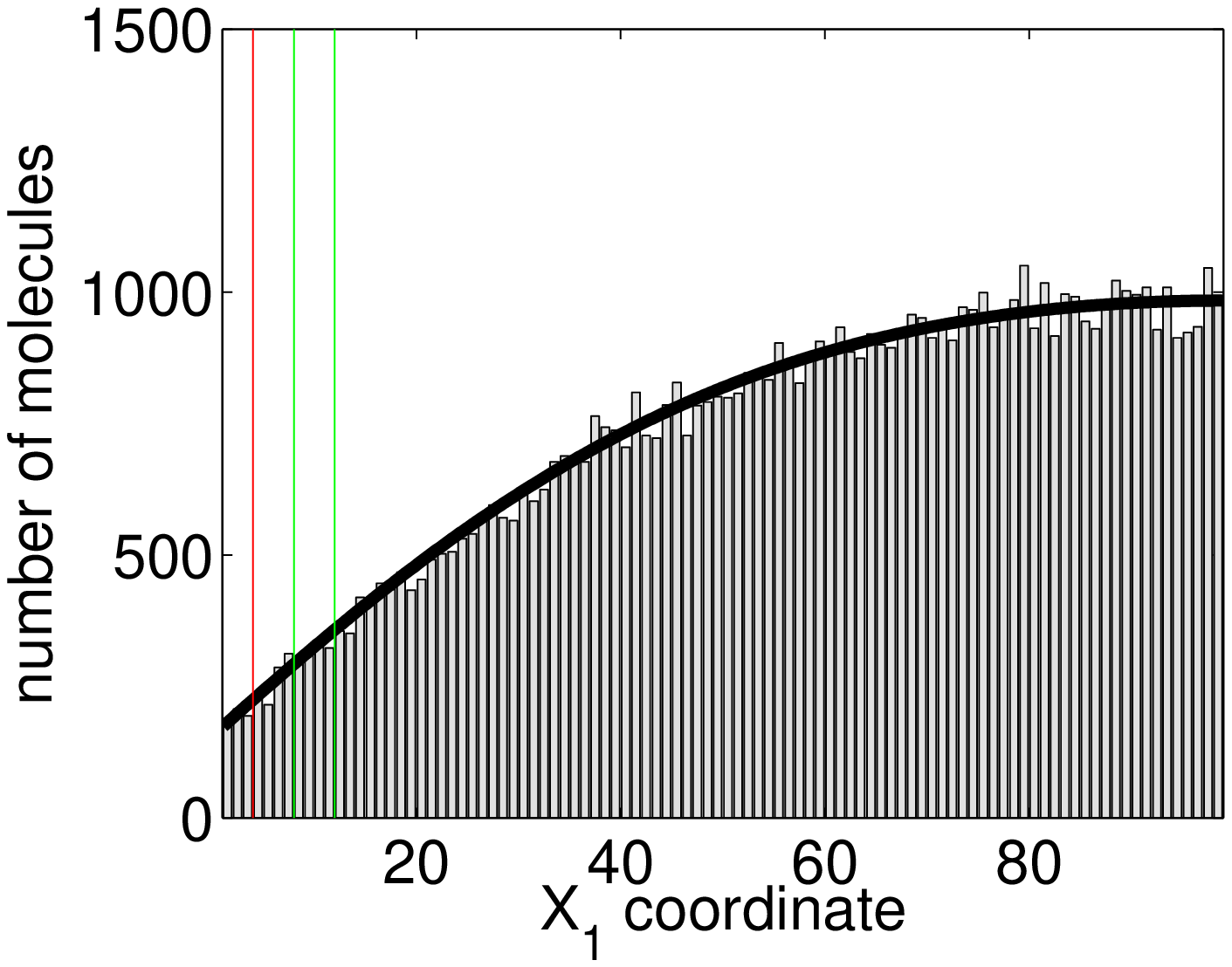}{./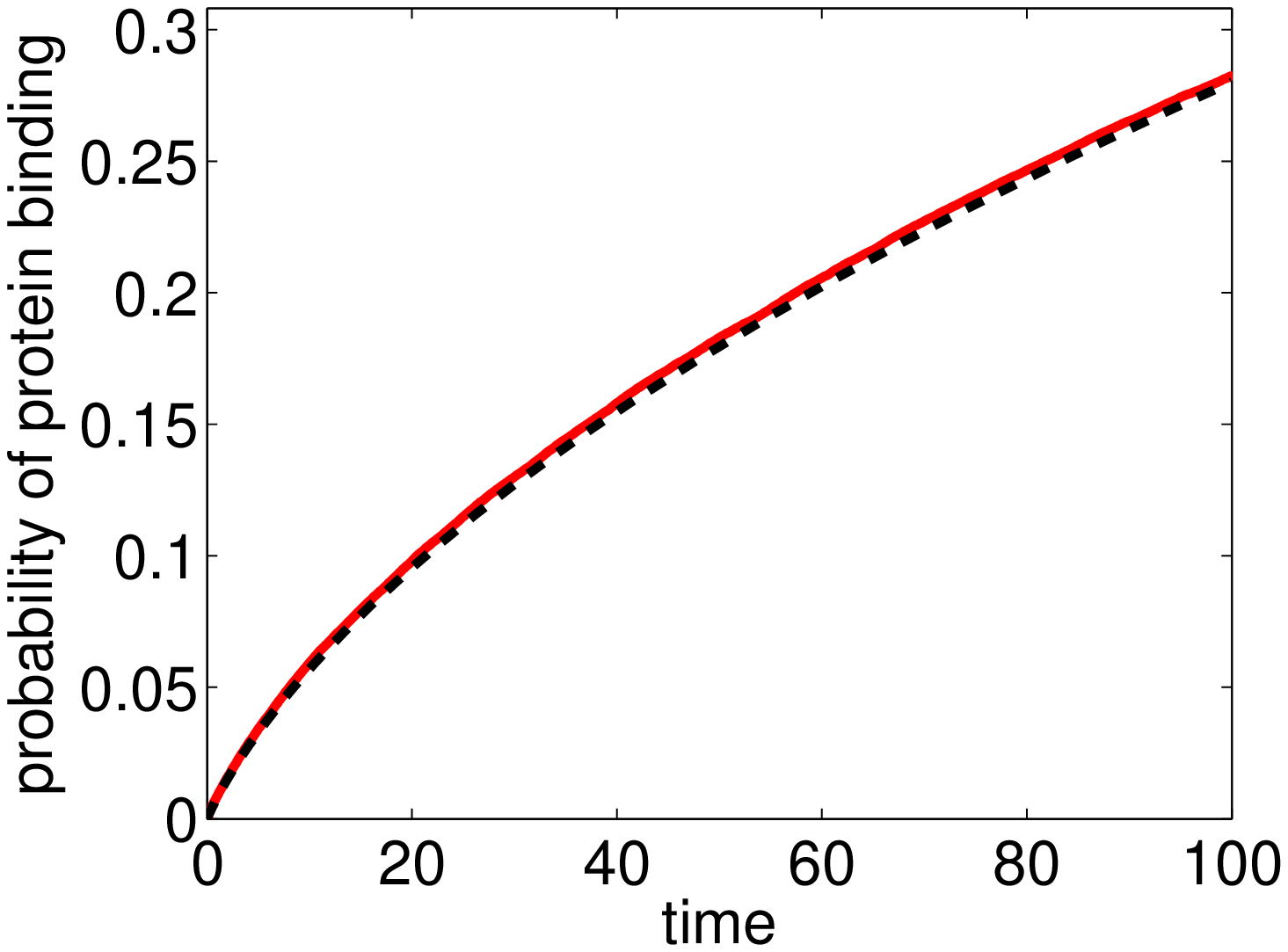}{5.3cm}{3mm}{5.1cm}
\caption{(a) {\it Distribution of positions along the $x_1$-axis
at time $t=1$ computed by the multiscale model described in 
Section $\ref{bioapplication}$ (gray histogram). 
The black solid line is the solution of the limiting PDE model 
$(\ref{difprotbind1})$--$(\ref{difprotbind2})$. Vertical lines
denote interfaces $I$, $I_2$ and $I_3$.}
(b) {\it Probability that the protein is bind to a receptors 
as a function of time $t$ computed by the multiscale
model (red solid line) and the PDE model 
$(\ref{difprotbind1})$--$(\ref{difprotbind2})$
(black dashed line). Parameters used:
$D=10$, $\gamma = 10^2$, $\mu=10^3$, $K=1$, 
$R=1$, $L_1 = 10^2$, $h = 4$, $h_2 = 8$, $h_3=12$
and $h_4 = 3$. 
}}
\label{figure6}
\end{figure}
Interfaces $I$, $I_2$ and $I_3$ are also shown in this plot. Since
we used a very simple model of the protein binding, we can compare it
with the mean-field limit given by the solution of the partial
differential equation (PDE)
\begin{equation}
\frac{\partial \varrho}{\partial t}(x_1,t)
=
D \,
\frac{\partial^2 \varrho}{\partial x_1^2}
(x_1,t),
\qquad
x_1 \in [0,L_1], 
\quad
t \ge 0,
\label{difprotbind1}
\end{equation}
with boundary conditions \citep{Erban:2007:RBC}
\begin{equation}
D \, \frac{\partial \varrho}{\partial x_1}
(0,t)
=
K \, \varrho (0,t),
\qquad
D \frac{\partial \varrho}{\partial x_1}
(L_1,t)
=
0,
\qquad
\mbox{where}
\qquad
P = \frac{K \sqrt{2 \pi}}{\sqrt{D \gamma}}.
\label{difprotbind2}
\end{equation}
The solution of (\ref{difprotbind1})--(\ref{difprotbind2}) with
uniform initial condition $\varrho(x_1,0) \equiv \mbox{const}$
is given by the black solid line in Figure \ref{figure6}(a). 
Since we only visualize the distribution along the $x_1$ axis
in Figure \ref{figure6}(a), we can further decrease the 
computational cost by truncating the simulation domain in the 
$x_2$ and $x_3$ directions to the region close to the protein 
molecule. That is, we only simulate small particles in 
the subdomain $[0,h] \times [X_2(t') - h_4, X_2(t') + h_4] \times 
[X_3(t') - h_4, X_3(t') + h_4]$ where $t'$ is the time when 
the protein molecule enters $\Omega_D \cup \Omega_{C1}$.
This subdomain  (moving window) is shifted accordingly whenever $X_2(t)$ 
or $X_3(t)$ approach its boundary.

The probability that the protein is adsorbed to the surface
is given as a function of time in Figure \ref{figure6}(b). 
It again compares well with the results obtained by the 
limiting PDE system (\ref{difprotbind1})--(\ref{difprotbind2}).

\section{Discussion}

\label{secdiscussion}

I have presented and analysed a multiscale approach which uses 
MD simulations in a part of the computational domain and BD simulations 
in the rest of the domain. The ultimate goal of this research is
to use MD to help parameterize BD models of intracellular processes.
One application area is modelling proteins in an aquatic environment
which is useful for understanding protein binding to receptors 
(surfaces) as shown in Section \ref{bioapplication}. 

The main idea of the presented coupling of MD and BD models is based
on using equations (\ref{halfSDE}) and (\ref{halfSDE3D}) and
estimating drift and diffusion coefficients for velocities of
molecules which cross the interface $I$. This coupling uses the
same time step for the BD model (\ref{BDXeq})--(\ref{BDVeq})
as for the MD model. In Section \ref{bioapplication}, it was shown
that this is not a limiting step of this approach, because the BD
model (\ref{BDXeq})--(\ref{BDVeq}) is only needed in a small part
of the domain next to $\Omega_D$. Then the coarser BD model 
(\ref{eq4overline}) with larger time step can be used in the rest 
of the simulation domain, using a suitable overlap region. Another 
overlap region could be used to couple BD simulations with 
mean-field PDE-based models \citep{Franz:2013:MRA}. Then multiscale
models which couple BD (of point particles) with coarser 
reaction-diffusion approaches would be capable of further
increasing time scales and space scales of simulations
\citep{Flegg:2012:TRM,Franz:2013:MRA}.

MD models considered in this paper are relatively simple and analytically 
tractable, describing water molecules as point particles. An important 
generalization is to consider more complicated MD models of 
water molecules \citep{Huggins:2012:CLW}. For example, 
\cite{Rahman:1971:MDS} model water molecules 
as rigid asymmetric rotors. That is, each water molecule is 
described by six coordinates: the position of its centre of 
mass and three angles describing molecule orientation. 
The energy of water solution is given as the sum of kinetic 
energies (for translation and rotation) and the intermolecular 
potential which is assumed to be pairwise additive and can 
be given in several different ways, i.e. the heat bath is given
by its Hamiltonian \citep{Rahman:1971:MDS,Huggins:2012:CLW,Mark:2001:SDT}.
I am currently investigating MD models based on Hamiltonian 
dynamics, with the aim of designing and analyzing multiscale
algorithms similar to the algorithm [M1]--[M8] from this 
paper. The ultimate goal of this research is to design BD models
of intracellular process which make use of modern MD simulations 
\citep{Merz:2010:LFE,Deng:2009:CSB} to infer parameters of BD 
models \citep{Lipkova:2011:ABD}.
I will report my results in a future publication.

\section*{Acknowledgements}

I would like to thank the Royal Society for a University Research 
Fellowship; Brasenose College, University of Oxford, for a Nicholas 
Kurti Junior Fellowship; and the Leverhulme Trust for a Philip 
Leverhulme Prize. The research leading to these results has received 
funding from the European Research Council under the {\it European 
Community's} Seventh Framework Programme {\it (FP7/2007-2013)} /
ERC {\it grant agreement} No. 239870. 

{\small

}

\end{document}